\newcommand{\tr}{\mathrm{Tr}}
\DeclareMathOperator{\PR}{PR}
\title{Neural population geometry and optimal coding of tasks with shared latent structure}
\author[1,2]{Albert J. Wakhloo}
\author[2,3]{Will Slatton}
\author[2,3]{SueYeon Chung}
\affil[1]{Center for Theoretical Neuroscience, Columbia University}
\affil[2]{Center for Computational Neuroscience, Flatiron Institute}
\affil[3]{Center for Neural Science, New York University}
\date{}
\begin{document}
\maketitle

\section{Abstract}


Humans and animals can recognize latent structures in their environment and apply this information to efficiently navigate the world. However, it remains unclear what aspects of neural activity contribute to these computational capabilities. Here, we develop an analytical theory linking the geometry of a neural population's activity to the generalization performance of a linear readout on a set of tasks that depend on a common latent structure. We show that four geometric measures of the activity determine performance across tasks. Using this theory, we find that experimentally observed disentangled representations naturally emerge as an optimal solution to the multi-task learning problem. When data is scarce, these optimal neural codes compress less informative latent variables, and when data is abundant, they expand these variables in the state space. We validate our theory using macaque ventral stream recordings. Our results therefore tie population geometry to multi-task learning.

\section{Introduction}

Humans constantly solve different instances of similar problems. We brake at stop signs, stop at red lights, and slow down in crowded streets. We do these things effortlessly and efficiently learn to use new sensory cues to regulate our behavior. This is possible because we are able to recognize overt symbols like road signs, as well as more abstract visual cues like the crowdedness of a street. More generally, humans and animals learn to recognize latent variables in the environment and use them to guide their behavior across contexts and tasks. 

Recent experimental findings have described coding strategies that may underlie this ability. In particular, several studies have described cases where independent variables in the environment are represented along distinct directions of variation in the neuronal activity space \cite{Courellis2023.11.10.566490, nogueira2023geometry, johnston2023semi, bernardi2020geometry, higgins2021unsupervised, chang2017code, flesch2022orthogonal}–e.g., in orthogonal subspaces of the firing rates of a collection of neurons \cite{nogueira2023geometry, flesch2022orthogonal, johnston2023abstract, libby2021rotational, srinath2024orthogonal}. These independent factors have ranged from the contacts of distinct mouse whiskers \cite{nogueira2023geometry}, to more abstract latent variables such as the values of different choices in a decision making task \cite{johnston2023semi}. Neural representations in which distinct environmental variables are represented along independent or orthogonal directions of variation are referred to as factorized or disentangled. Factorized representations were recently shown to emerge in artificial networks trained on multiple tasks \cite{johnston2023abstract} and are thought to support generalization to new contexts as well as efficient learning of new tasks that depend on shared latent variables \cite{bernardi2020geometry, higgins2018towards}.

Similarly, other studies have argued that the brain makes widespread use of cognitive maps to solve these problems \cite{behrens2018cognitive}. These are coding strategies in which environmental variables are represented in the population code in a way that preserves task-relevant relations between them. This idea is supported by a range of findings, for example, studies in which structurally similar tuning profiles emerge in a neural population for distinct types of environmental variables \cite{aronov_mapping_2017, knudsen_hippocampal_2021, nieh_geometry_2021, bao_grid-like_2019, constantinescu_organizing_2016, muhle2023goal}. These variables have ranged from an animal's position \cite{moser2008place}, to the frequency of an auditory stimulus \cite{aronov_mapping_2017}, to more abstract quantities like the amount of evidence accumulated in a decision making task \cite{nieh_geometry_2021}. Here and in the findings referenced above, neural populations represent latent structure in the environment in a way that supports a target behavior. However, defining measures for and understanding why certain neural activity patterns represent latent variables in task-efficient ways remains challenging \cite{urai2022large}.

A promising approach to tying neural activity patterns to such computational goals is to study the geometry of the neural responses \cite{chung2021neural}. Here, the overarching idea is to find which mesoscopic statistics of the population activity contribute to a macroscopic target computation or behavior. In this way, we can gain insight into neural computation without having to give a detailed account of microscopic single unit activity. For example, in the domain of invariant object recognition, recent works analytically tied coding efficiency \cite{chung2018classification, cohen2020separability, wakhloo2023linear, chou2024neural} and few-shot generalization performance \cite{sorscher2022neural} to measurable statistics of the population activity. Recent works have also studied a wide range of motor \cite{ elsayed2016reorganization, russo2018motor, perich2018neural, lanore2021cerebellar, manley2024simultaneous}, sensory \cite{nogueira2023geometry,chang2017code, bao_grid-like_2019, stringer2019high,  she2021neural, froudarakis2020object}, and decision making \cite{johnston2023semi, bernardi2020geometry, knudsen_hippocampal_2021, nieh_geometry_2021} computations and behaviors by analyzing the geometry of neural population responses. 

In this study, we develop an analytical theory for learning binary decision making tasks depending on a common latent structure that directly ties the statistics of neural population responses to generalization performance. While several authors have used linear probes or heuristic geometric measures to analyze population activity in such tasks (e.g. \cite{nogueira2023geometry, bernardi2020geometry}), to the best of our knowledge, there is no cohesive theory that directly ties mesoscopic statistical features of neural activities to task performance in these settings. To fill this gap, we analytically calculate how neural population geometry shapes the generalization error of an agent learning multiple tasks that depend on a common latent structure. To do this, we use a flexible model in which binary classification tasks are formed in an unobserved latent space, and an agent makes decisions by applying a linear readout to a set of neural population activities \cite{johnston2023abstract, gardner1989three, goldt2020modeling, loureiro2021learning, engel2005stat}. 

Using these results, we show how the error is completely determined by a small set of geometric statistics summarizing the relationship between the latent variables and neural responses. We leverage this theory to derive normative predictions describing the optimal geometry that a set of population responses should take. In this way, we show that disentangled representations naturally emerge as an optimal solution to the multi-task learning problem. Furthermore, we show how optimal codes compress less useful information when data is scarce and expand this information in the activity space when data is abundant. We go on to describe how this strategy would be reflected in measurable properties of the eigenspectrum of neuronal responses. Finally, we show that our theory accurately predicts the generalization error of readouts trained on artificial neural network (ANN) data as well as multi-unit recordings taken from the macaque ventral stream \cite{majaj2015simple}, and we demonstrate the use of our geometric measures as a data-analytic tool.

\section{Results}

\subsection{Theory of multi-task learning}

We study the ability of a neural population to support downstream learning of tasks that depend on a common latent structure. Specifically, we consider a setting in which an agent learns binary classification tasks using a set of $ p $ training stimuli. These stimuli are formed from a set of unobserved $d-$dimensional latent variables, $z \in\mathbb{R}^d$ (Fig 1a). The task labels for a given binary classification task are formed by linearly separating the latent space into two pieces using a hyperplane with a normal vector, $ T $. Note that while the separation is linear in the latent space, it may be highly non-linear in the stimulus space. Alongside the latent variables, we consider the activity patterns of $n$ neurons for each stimulus. We denote the vector of activity patterns for each stimuli as $ x\in \mathbb{R}^n$. We quantify how well these neuronal activity patterns support downstream classification by calculating the generalization error of an agent that makes predictions using  a linear readout of these activities, which is learned using a supervised Hebb rule (Methods; Fig. \ref{fig:schem}(e-f)) \cite{engel2005stat}. As described in the Supplemental Material (SM), we calculate the generalization error of this readout both for a single, fixed task as well as the average generalization error across all possible tasks. In this way, we connect the statistical properties of the neural code to the multi-task learning problem. 

While the generalization error for arbitrary distributions of neuronal activities and latent variables is analytically intractable, we show that in many cases, the error only depends on a few key statistics. To do this, we draw from recent work in deep learning theory showing that the generalization error of linear readouts trained on complex distributions can in many cases be well approximated by studying simpler Gaussian models (Methods) \cite{goldt2020modeling, loureiro2021learning, goldt2022gaussian,montanari2023universality}. Thus, we analytically calculate the generalization error using a Gaussian model and show empirically that our theory accurately predicts the generalization error of the linear readout rule when applied to the activations of non-linear multi-layer perceptrons (MLPs; Sec. \ref{sec:MLP}) as well as real neural responses from V4 and IT on visual classification tasks (Sec. \ref{sec:maj}).

\begin{figure}
    \centering
    \includegraphics[width=0.95\textwidth]{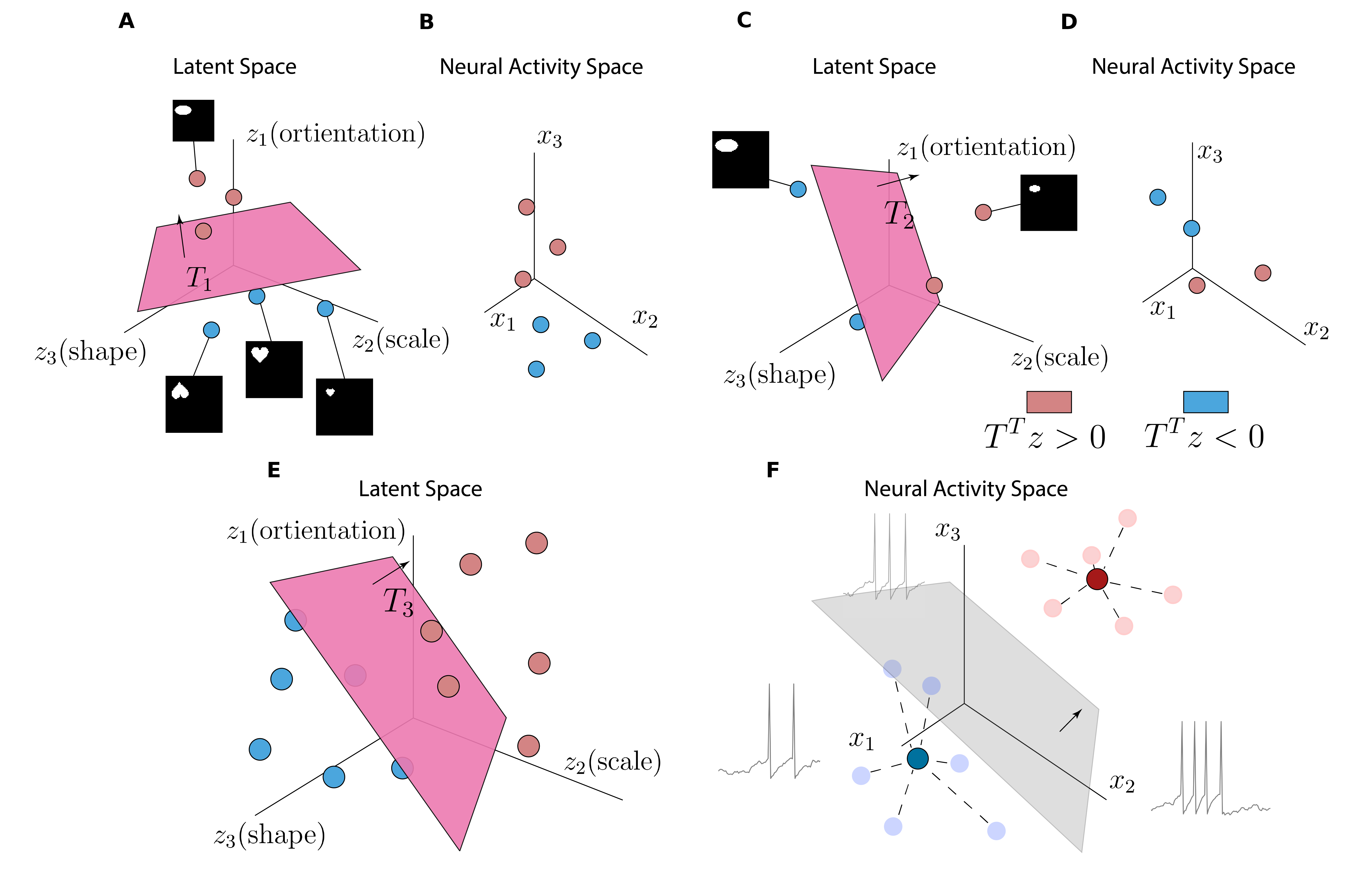}
    \caption{Schematic of the task and model setup using images from the d-sprites dataset as an example \cite{dsprites17}. (a)  Visual stimuli are generated from points in a latent space, and binary discrimination tasks are formed by linearly separating the latent space using a hyperplane with normal $T_1$. (b) Each stimulus elicits a neuronal activity pattern, visualized as points in an activity space. (c) A new binary discrimination task can be formed by separating the latent space using a different hyperplane with normal $T_2$.  (d) As before, stimuli elicit neural activity patterns. (e-f) Schematic of the Hebbian learning rule we consider. Given a set of $p=12$ training stimuli, subsequent decisions are made using a linear readout of the neural activity patterns, in this case the firing rates of 3 neurons. When the number of positive (red) and negative (blue) labels is balanced, this linear readout corresponds to using a hyperplane whose normal points in the direction of the difference in the means of positive and negative examples .}
    \label{fig:schem}
\end{figure}

\begin{figure}
    \centering
    \includegraphics[width=0.95\textwidth]{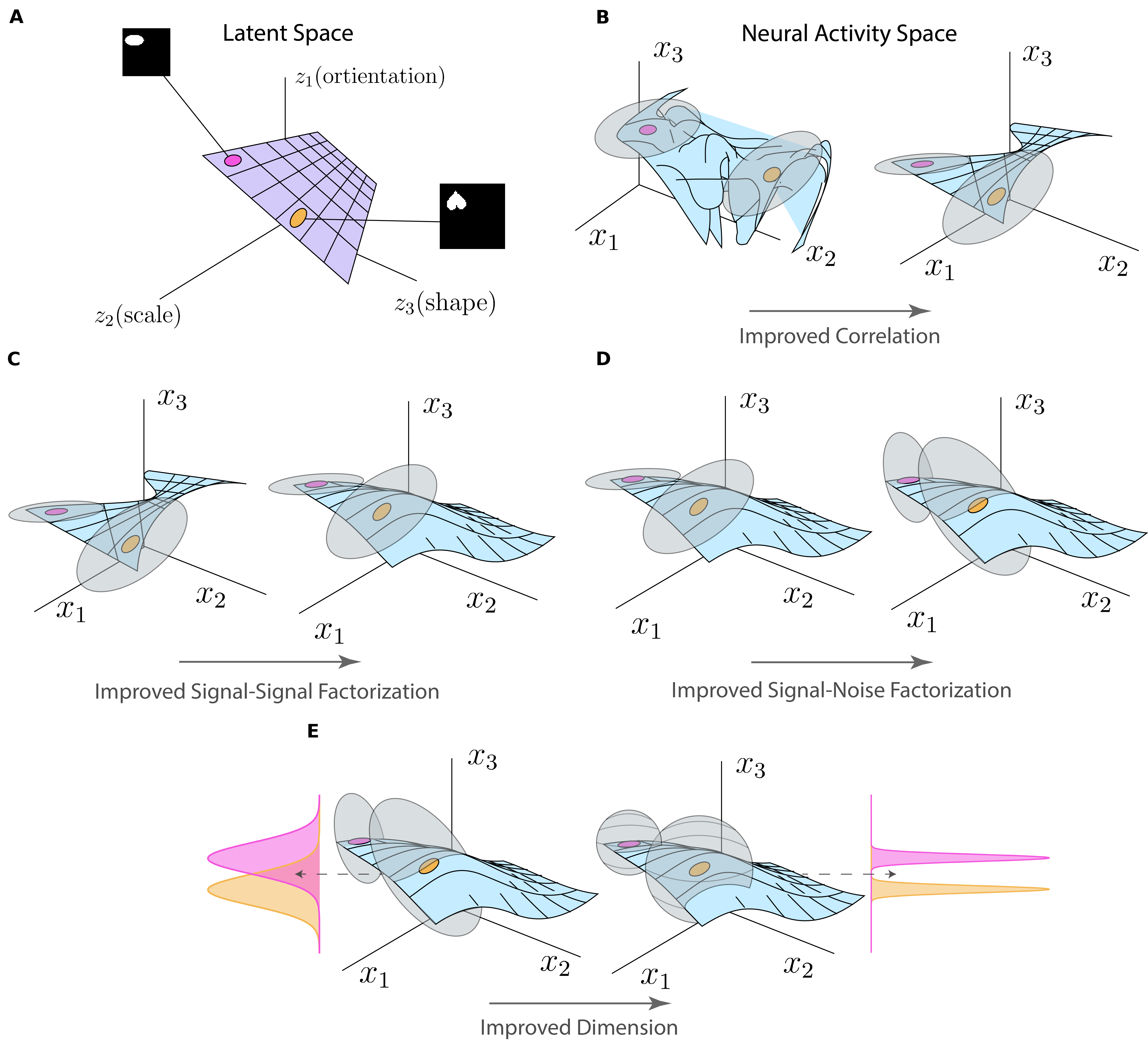}
    \caption{Schematic of the geometric terms. We visualize different possible neuronal activity patterns elicited by the same set of stimuli. (a) A small slice of the latent space from which stimuli are generated. (b) Visualization of neural activity patterns with low (left) and high (right) total correlation. When the correlation is high, the relative distances between points in the latent space are approximately preserved in the neural state space. (c) Signal-signal factorization (SSF). When the SSF is low, different latent variables are represented along overlapping directions, and when it is high, independent directions in the latent space are represented along approximately orthogonal directions in the neural state space. (d) Signal-noise factorization (SNF). When the SNF is low, the noise distribution (grey ellipses) around a point in the firing rate space falls along the coding directions. When it is high, the noise distribution is orthogonal to these directions. (e) Neural dimension. In higher dimensional representations, the neural activity and associated noise distribution occupies more directions in the state space, shown here as 2d (left) vs. 3d (right) noise distributions. As the dimension increases, the projection of a sample of neural activity onto a given direction becomes increasingly concentrated, supporting generalization performance \cite{sorscher2022neural}.}
    \label{fig:geometry}
\end{figure}

Using this simplified model, we derive a formula for the average generalization error. As described in the Supplemental Material, we prove that under certain conditions, the generalization error can be written as a strictly decreasing function of four geometric terms, which we schematize in Fig. \ref{fig:geometry}. We now discuss each of these one by one: 
\begin{itemize}
  \item \textbf{Neural-latent correlation:} $c.$ This term is a normalized sum of all covariances between neurons and latent variables. Geometrically, the neural-latent correlation measures how sensitive the population responses are to variations in the latent space.  
  \item \textbf{Signal-signal factorization (SSF):} $f$. This term measures the alignment between the coding directions of distinct latent variables. The signal-signal factorization term favors neural coding schemes that represent distinct latent variables along orthogonal directions in the state space. Moreover, this term encourages these representations to devote equal variance to distinct coding directions–i.e., to form a whitened representation of the latents. 
  \item \textbf{Signal-noise factorization (SNF):} $s$. This term measures the magnitude of the noise that lies along the coding directions of the latent variables. Ideally, any noise present in the neural responses should lie in directions which are orthogonal to the directions representing the latent variables.   
  \item \textbf{Neural dimension:} $\PR.$ The participation ratio of the neural responses measures the effective number of dimensions that the population activity spans. When all else is equal, higher dimensional responses are preferred, as neuronal noise is less correlated from trial to trial \cite{sorscher2022neural}. 
\end{itemize}
Finally, we note that the contribution to the error of the neural dimension and neural-latent correlation decays to 0 with the number of training samples (Methods). On the other hand, the error stemming from the signal-signal and signal-noise factorization represents an \emph{irreducible error} that does not depend on the number of training samples (SM). Intuitively, in the few-shot regime, the main concern is maximizing the amount of total signal in the neural code, while minimizing the impact of noise. These two aspects are primarily controlled by the total correlation and dimension, respectively. On the other hand, in the many-shot regime the main concern becomes keeping the representations of distinct latent variables separate from one another and separate from noise directions. These features of the code are respectively controlled by the two factorization terms. 


\begin{figure}
    \centering
    \includegraphics[width=0.95\textwidth]{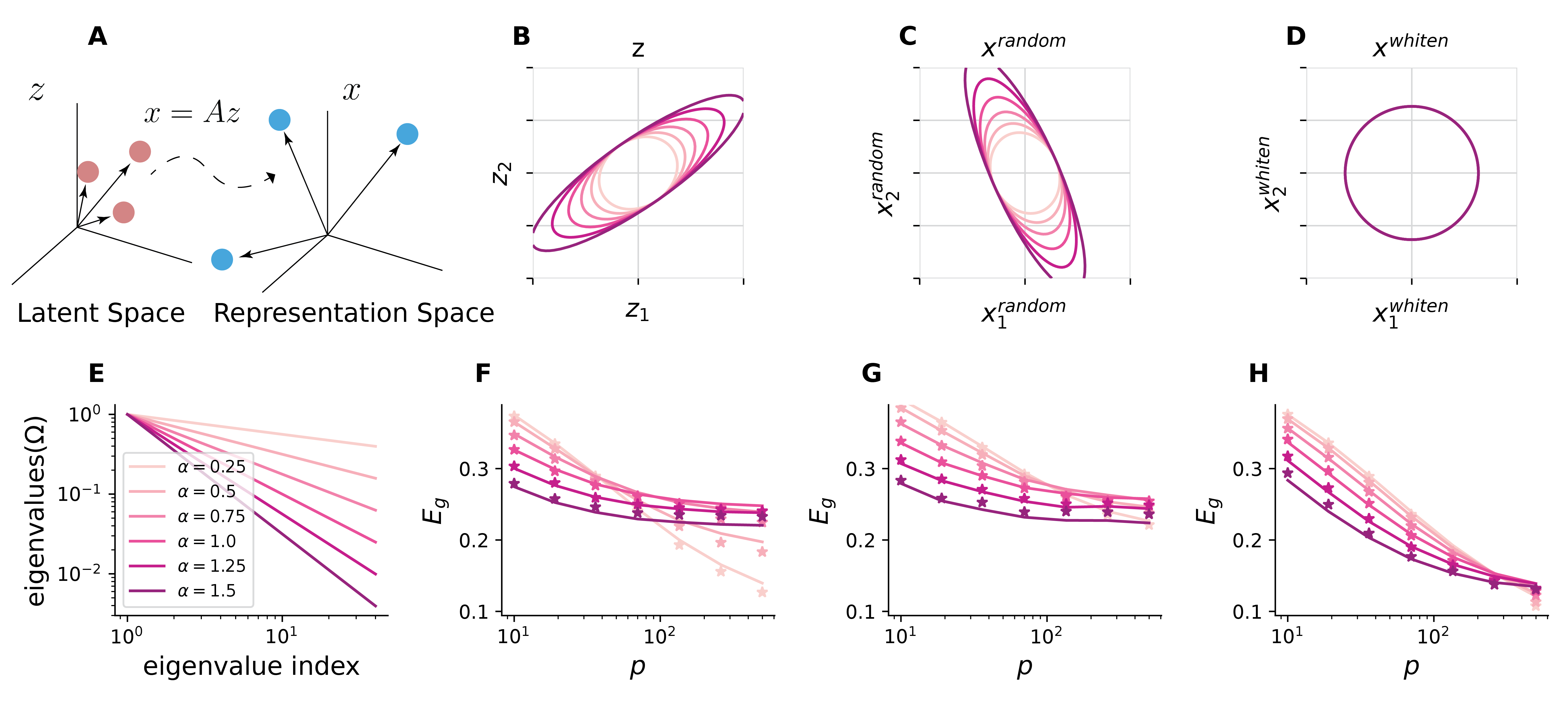}
    \caption{Theory predicts empirical generalization error in Gaussian model with power law covariance spectra. (a) Schematic illustrating simulation setup. Gaussian latent variables $z$ are used to generate task labels, and predictions are formed using a linear transformation of the latent variables, $x=Az$. (b-d) Two typical units for the (b) latent variables, (c) random high dimensional projection, and (d) whitened transform for various values of the spectral decay exponent, $\alpha.$ (e) Eigenvalues of the latent covariance, $\Omega$, for different decay rates, $\alpha.$ (f-h) Multi-task generalization error as a function of training samples, $p$, for (f) the latent variables themselves, (g) the random projection, and (h) the whitened transform.}
    \label{fig:cloud}
\end{figure}

We begin by validating our theory numerically on data drawn from a Gaussian model. Here, we find that our theory yields an excellent agreement with numerical simulations for a wide range of  values (Methods). In these simulations, we sample the latents $ z_\mu $ and neural responses $ x_\mu $ from a multivariate Gaussian. We set the covariance matrix of the latents to have a spectrum that decays as a power law. Since the proof of our main theorem assumes that the spectra of the covariances decays slowly, this allows us to parametrically study violations of our main theorem's assumptions (Fig. \ref{fig:cloud}a-e; SM). After sampling the latents, we form the neural responses either by taking a random high dimensional projection of the latent variables, or by applying a whitening transform to the latent variables (Fig. \ref{fig:cloud}b-d; see Methods). We then calculate the empirical generalization error across a set of binary classification tasks which are formed by shattering the latent space as above. 

As shown in Fig. \ref{fig:cloud}f-h, our theory yields an excellent fit to numerical simulations. Importantly, the theoretical prediction holds all the way down to the few shot learning regime in which the number of training samples $ p $ is small, despite the theory being derived in the limit of large $p$. Furthermore, we find that the theory predicts the empirical generalization error well for a relatively small number of neurons $n $ and latent dimensionality $ d $, as well as latent variables whose covariance has an eigenspectrum that decays relatively quickly \cite{stringer2019high} (Methods). These results suggest that our theory can be applied to a wide range of datasets and is informative of the few shot learning regime.

\subsection{Optimal representation of latent variables}

\begin{figure}
    \centering
    \includegraphics[width=0.95\textwidth]{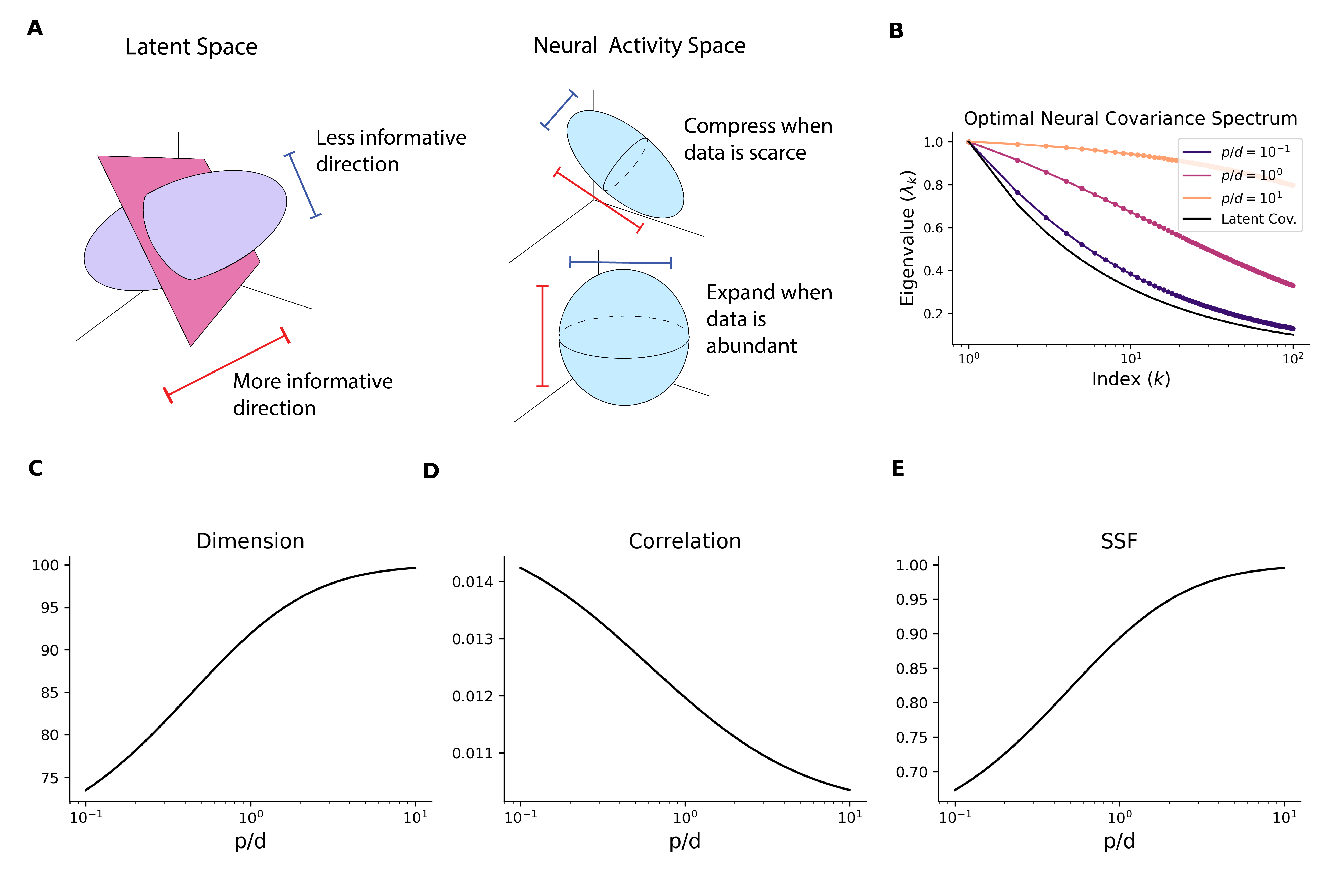}
    \caption{Optimal representational geometry as a function of training samples and latent structure. (a) In our task setup, directions in the latent space that have little variance are, on average, less informative of the task labels (Methods). Optimal neuronal representations of these latent variables map independent directions in the latent space to independent directions in the neuronal space. The amount of variance corresponding to less informative directions in the latent space is small when data is scarce and is large when data is abundant. (b) Eigenvalues of the optimal neural covariance as a function of the number of samples vs. latent dimension, $p/d $. We show the eigenvalues of the latent variables' covariance in black. Markers correspond to results obtained by optimizing our formula for the generalization error numerically, and solid lines correspond to our formula for the optimal code's spectrum (Methods). As the number of samples increases, the spectrum of the optimal neural code becomes increasingly flat, indicating the expansion strategy described above. (c-e) Geometric terms of the optimal neural representation for various values of $p$, given the same latent covariance. (Note that we do not plot the signal-noise factorization, as it diverges for the optimal representation for all $p$.) Surprisingly, the total correlation \emph{decreases} with $p$.  }
    \label{fig:opt-rep}
\end{figure}

Our theory provides an ideal framework to pose normative questions regarding multi-task learning. In our model, latent variables which have less variance contain, on average, less information about the task labels (Fig. \ref{fig:opt-rep}a). This allows us to probe how task structure and the number of available training samples determine which neural representations are optimal in the sense of achieving the lowest possible generalization error. In this section, we study the geometry of optimal neural representations, given some latent variables and a fixed training dataset size. 

We show analytically that the optimal representation disentangles latent variables into orthogonal subspaces. More precisely, we show that each orthogonal direction of variation in the latent space maps onto an orthogonal direction of variation in the neural state space. Thus, disentangled representations emerge as the optimal solution to the multi-task learning problem. 

In addition to being disentangled, we find that the optimal code compresses less informative variables when data is scarce and expands these variables when data is abundant. Importantly, while each latent is represented in an orthogonal direction, the variance along each direction is not equal. As shown in Fig. \ref{fig:opt-rep}a, when training data is limited, less informative latent variables correspond to directions in the neuronal activity space that have small variance. With increasing training data (i.e., as $ p $ increases), the amount of variance dedicated to these less informative latent variables grows. This reflects a strategy in which the optimal code compresses less informative variables at small training set sizes $ p $, while at large sample sizes, these variables are expanded in the state space (Methods). Speaking informally, the optimal code only starts paying  attention to the less informative latent variables when there are enough samples to learn their relevance to a given task.

We trace these features of the optimal neural code back to the eigenvalues of the neuronal covariance matrix and our geometric terms. As shown in Fig \ref{fig:opt-rep}b, the eigenspectrum of the optimal code becomes increasingly flat as $p$ increases. This reflects the fact that more and more variance is being dedicated to the less informative directions in the latent space (i.e., directions with smaller variance). Turning to the geometric terms, we can see that this trade off appears as a trend in which the participation ratio and signal-signal factorization term of the optimal code both increase with $ p $, while the correlation \emph{decreases} (Fig. \ref{fig:opt-rep}c-e). That is, optimal neural codes become increasingly high dimensional as the number of available training samples increases. A key normative prediction from our theory is therefore that the task-related variability of neural responses becomes increasingly high dimensional as an agent learns to perform tasks that depend on a complex latent structure.

\subsection{Geometry of multi-task learning in MLPs}
\label{sec:MLP}

\begin{figure}
    \centering
    \includegraphics[width=0.95\textwidth]{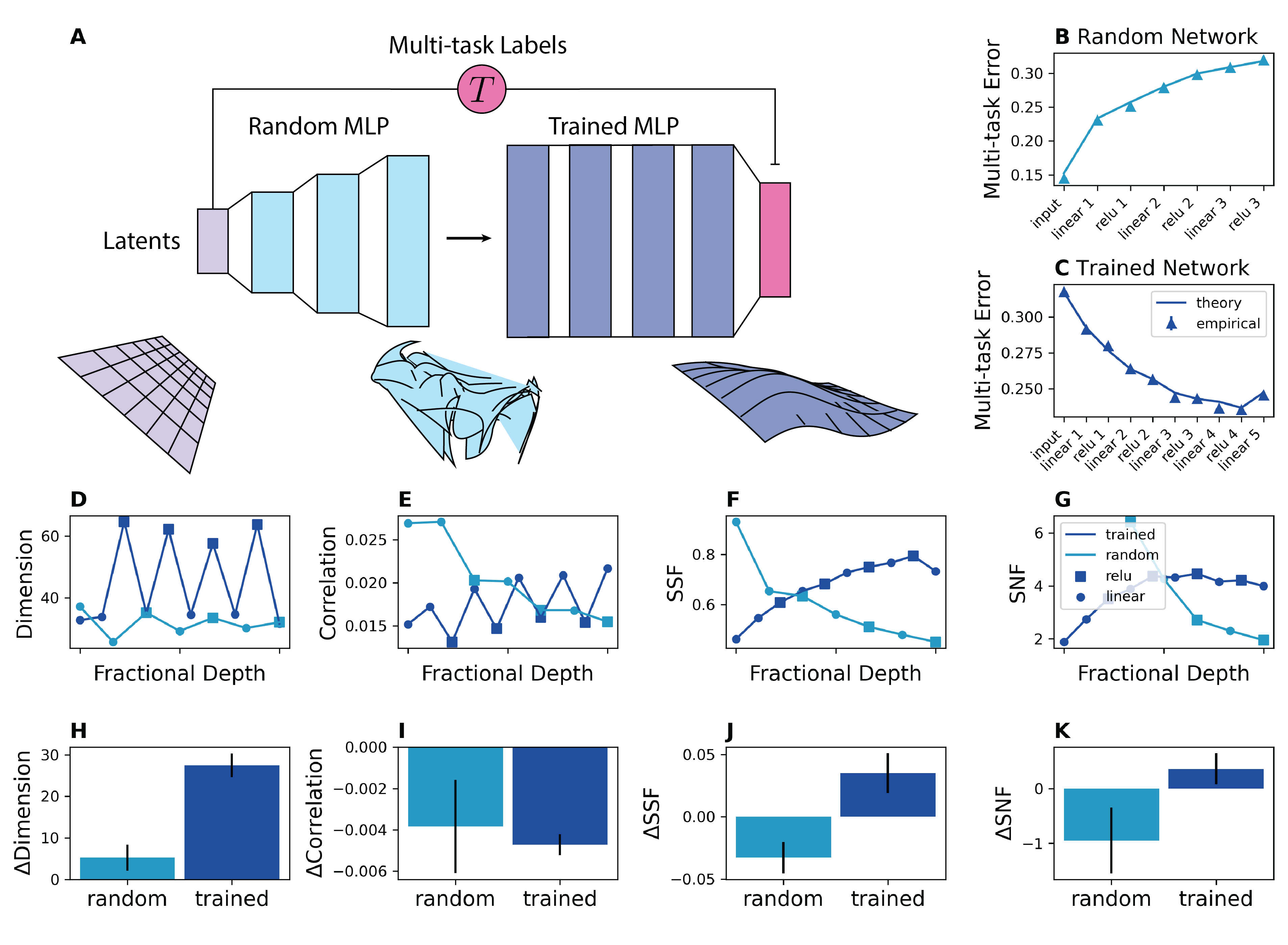}
    \caption{Theory predicts generalization error of the Hebb rule in trained and random MLPs. (a) Schematic of the experiment. Latent variables $z$ are randomly shattered to generate task labels. These latents are passed through a random MLP (light blue) and are then used as inputs to train a 3 hidden layer MLP (dark blue) on the multi-task binary classification problem using stochastic gradient descent. (b-c) After training, we sample a new set of latents and teacher vectors and calculate the generalization error of the Hebb rule on each layer of the (b) random and (c) trained network. Theoretical predictions closely track empirical errors, and the trained network achieves a lower error in intermediate layers. (d-g) Geometric terms across layers for the random (light blue line) and trained (dark blue line) networks. Linear layers are marked by circles and relu layers by squares. Interestingly, the error only slightly changes across linear and relu layers of the same model stage, in spite of sharp changes in the geometry. (h-k) Average change in the geometry before and after the application of relu. Here we show the mean and standard deviation of the difference between each geometric term before and after the relu nonlinearity is applied. In the trained network, the application of relu consistently causes increases in the dimension and signal-signal factorization, as well as decreases in the correlation.}

    \label{fig:mlp1}
\end{figure}

In practice, neural data is complex and may be non-Gaussian. To test our theory in this regime, we now apply our theory to both random and trained non-linear multi-layer perceptrons (MLPs).  As shown in Fig. \ref{fig:mlp1}a, we first sample a set of zero-mean Gaussian latent variables $ z $. We then feed these latent variables through a random MLP with increasing hidden layer sizes. After generating this set of stimuli, we sample $500 $ random $ T $ vectors, and use these to generate a set of task labels (Methods). These labels and data points from the random MLP are then used to train a downstream 4-hidden-layer MLP, which predicts the label for each task using a linear readout of its penultimate layer. This is reminiscent of the hidden manifold modeling framework \cite{goldt2020modeling}. Finally, we sample a new set of latent variables along with a new set of tasks and calculate the generalization error of the Hebb rule when applied to the representations at different layers of the trained and untrained MLPs (Methods). This setup allows us to validate our theory on non-linear transformations of Gaussian latent variables. 

As shown in Fig. \ref{fig:mlp1}b-c, we find good agreement between our theoretical predictions and the empirical generalization error. Examining the layer-by-layer generalization errors we find that the random MLP successfully tangles neural representations of the latents, from the perspective of the downstream Hebbian readout, as the generalization error increases through the random MLP. On the other hand, we find that the trained MLP successfully manages to learn the latent structure of the task, as demonstrated by the fact that the generalization error of the Hebb rule drops sharply through the layers of the network. Overall, these results provide compelling evidence that our analytical formula holds well beyond settings in which the neural responses are Gaussian variables. 

Turning to our geometric decomposition of the error, we find several interesting trends across layers and non-linearities (Fig. \ref{fig:mlp1}d-g). In the random MLP, we see that the correlation decreases each time the non-linearity is applied, while the signal-noise and signal-signal factorization terms gradually decrease through the network. We also find that the dimension of the responses stays more or less constant through the network. 

In the trained network, we find that linear and relu layers orchestrate a trade off between the geometric terms that together lead to an overall decrease in multi-task generalization error. As shown in Fig \ref{fig:mlp1}h-k, the neural dimension spikes each time the nonlinearity is applied, at the cost of the total correlation. Conversely, at each linear layer, the correlation sharply rises while the dimension falls. On the other hand, we find that both the signal-signal and signal-noise factorization terms monotonically increase through the penultimate layer of the network. Importantly, this pattern is not present in the random network. Thus, the trained MLP learns to use the nonlinearity to increase the dimension of the representation as well as the factorization of the latent variables, while simultaneously squashing particularly harmful noisy directions of variability in the input. Intuitively, the linear layers learn to orient signal-unrelated features into parts of the state space that are zeroed out by the non-linearity \cite{keup2022origami}. While these are sharp changes in the geometry, the overall generalization error across layers exhibits only minor fluctuations between most linear and relu layers, highlighting the limitations of a generalization error-based approach to analyzing network activity.

We now study the dynamics of these geometric terms through training. To do this, we track the layer-wise generalization error and geometry through a single epoch of training (Methods). As shown in Fig. \ref{fig:mlp-dyn}, we find that the generalization error in this context decays through training across layers. In the linear layers, we find that improved generalization is driven by a steady rise in the signal-signal and signal-noise factorization as well as small increases in the total correlation and dimension. On the other hand, in relu layers, the correlation \emph{decreases} through training, with improved generalization being driven by increases in the dimension as well as the signal-signal and signal-noise factorizations across both early and late layers. We find similar trends for MLPs with a $\tanh$ non-linearity (SM Figs. 2\&3). These geometric trends mirror the effects of adding more samples on the optimal representation: As more samples become available, the factorization terms and dimension grow while the correlation shrinks. In other words, the dynamics of MLP representations in non-linear layers qualitatively mimic the behavior of the optimal neural code described by our theory.

\begin{figure}
    \centering
    \includegraphics[width=0.95\textwidth]{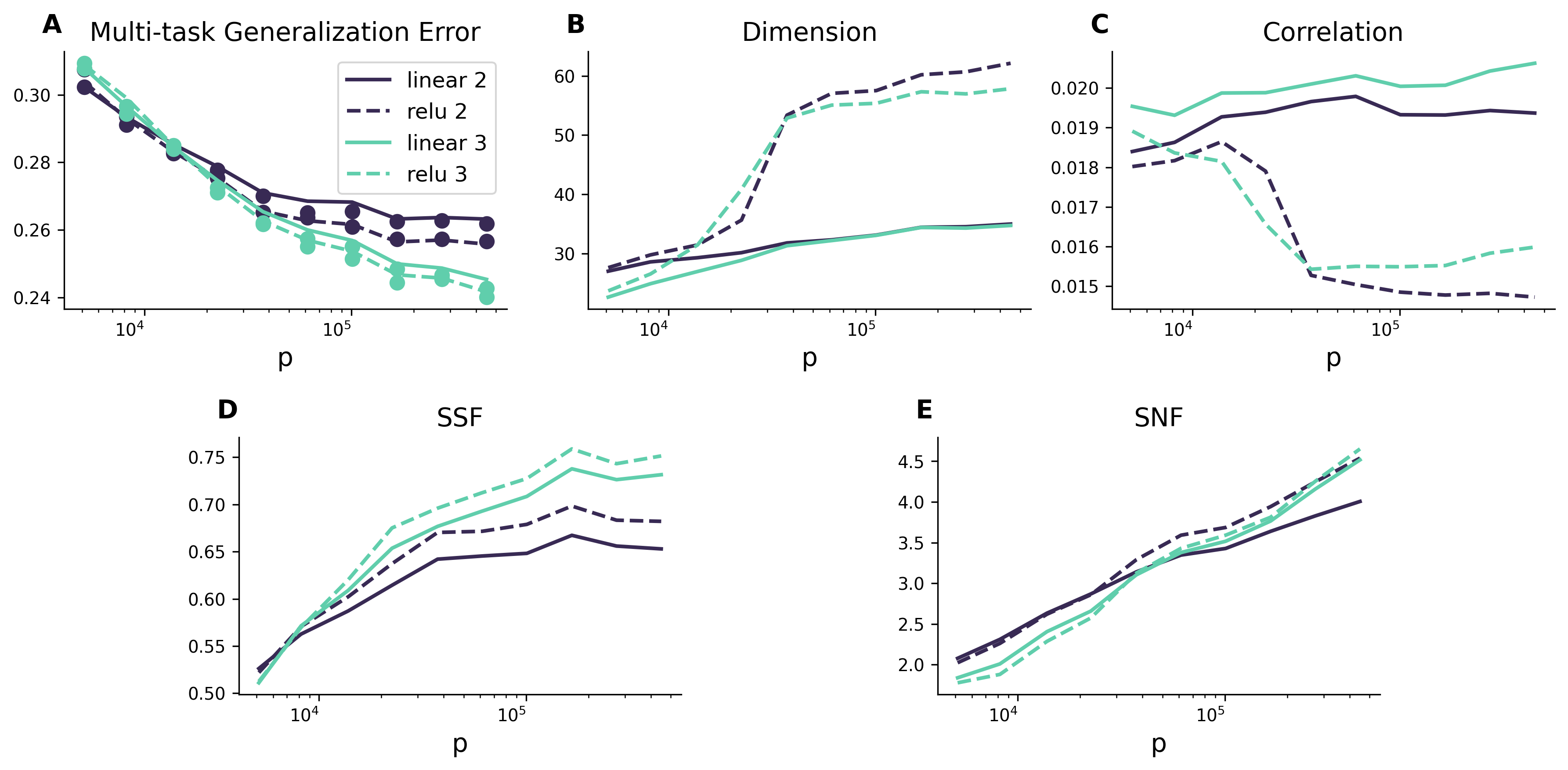}
    \caption{Evolution of generalization error and representational geometry through training. (a) Generalization error of the Hebb rule applied to four different layers on a previously unseen set of tasks. We examine layers from the early stages of the MLP (black) as well as late stages (green) for both linear (solid line) and relu (dashed line) layers. We can see that the error decreases across all layers with the number of training samples, $p$. (b) Dimension term. While the dimension of the neural activations increases for all layers through training, the differences are most pronounced in the relu layers. (c) Correlation term. This term decreases through training for relu layers and slightly increases for linear layers. (d) Signal-signal factorization term. (e) Signal-noise factorization term. Both of these terms uniformly increase through training across the network, though the former begins to plateau at the end of training. These findings mirror the effect of increasing the number of training samples on the optimal neural code.}
    \label{fig:mlp-dyn}
\end{figure}

\subsection{Predicting readout performance of macaque visual representations}
\label{sec:maj}
Having considered non-linear artificial neural networks, we now apply our theory to biological neural data. To do this, we draw from pre-existing multi-unit recordings from macaque V4 and IT taken while 2 monkeys viewed visual stimuli (Methods; Fig. \ref{fig:maj}(a-c)) \cite{majaj2015simple}. The stimuli used in these experiments included images of 64 objects taken from 8 categories and were generated by modifying $d = 6$ continuous latent variables, that included the size, position, and angle of the object.  This allows us to form binary classification task labels by shattering this 6-dimensional latent space on subsets of the data corresponding to individual object categories–e.g., a given task could involve separating images of chairs on the left side of the screen from those on the right. We test our theory by calculating the generalization error of the Hebbian readout rule applied to the V4 and IT neural responses.

We find that our formula accurately predicts the generalization error of the monkey neural responses across both recording sites. Furthermore, as shown in Fig. \ref{fig:maj}d, both V4 and IT achieve better generalization error than the raw pixels, and IT achieves a better generalization error than V4. \cite{hong2016explicit}. Applying our geometric decomposition of the generalization error to these data, we find that the dimension is lower in V4 than in either IT or the pixel data \cite{sorscher2022neural} (Fig. \ref{fig:maj}e). Furthermore, we find that the correlation steadily increases from the raw pixels to V4 and IT, indicating an increased overall signal. We also find that the level of signal-signal factorization improves from V4 to IT, in line with previous results \cite{lindsey2023factorized}. Interestingly, we find that the signal-signal factorization is higher in the pixel space than in either brain region. Finally, we find that the signal-noise factorization is far lower in the pixels and V4 than in IT. This suggests that in IT, latent-unrelated variability overlaps less with the coding directions than in V4 or the pixels (Fig. \ref{fig:maj}e-h).  Taken together, these results provide strong support for our theory and demonstrate the applicability of our metrics as a tool for tying the geometry of neuronal population responses to the computational objective of multi-task learning. 

\begin{figure}
    \centering
    \includegraphics[width=0.95\textwidth]{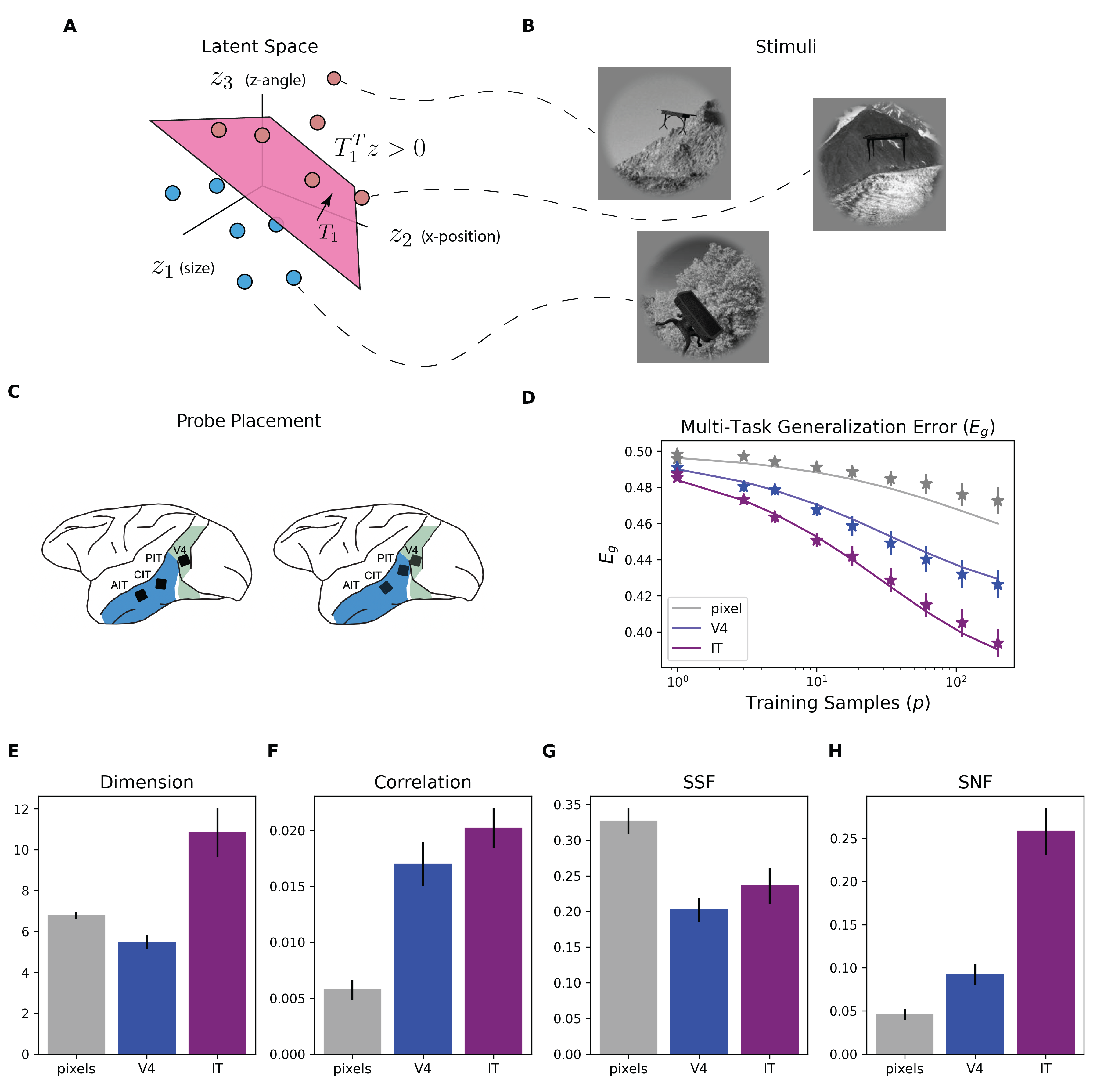}
    \caption{Theory predicts multi-task error in macaque V4 and IT data. (a-b) Example stimuli and tasks. Visual stimuli included 64 objects grouped into 8 categories and were generated by modifying $d=6$ continuous latent variables that included object size, position, and angle. We form binary classification tasks on subsets of the data coming from the same category–e.g. all images from the "Tables" category. (c) Probe placement. Figure adapted with permission from \cite{majaj2015simple}. (d) Generalization error across pixels and brain regions calculated empirically (markers) and using our formula for the generalization error (solid line). (e-h) Geometric terms across pixels and neural responses. We can see that V4 representations are lower dimensional than either the pixels or IT, that the correlation improves throughout, and that the signal-signal factorization is highest in the pixel space, though it improves from V4 to IT. Note too that the signal-noise factorization sharply rises from V4 to IT, suggesting that signal-unrelated variability overlaps less with the coding directions in the latter region.} 
    \label{fig:maj}
\end{figure}

\section{Discussion}

In this work, we analyzed a model of learning multiple tasks that share a common latent structure, connecting neural population geometry to the multi-task learning problem. In this model, binary classification tasks are generated by shattering a latent space uniformly. By varying the spectrum of the latent covariance matrix, we were able to model situations in which certain latent variables are, on average, more useful for downstream tasks than others. We then calculated the generalization error of a linear readout applied to a set of neural activations responding to stimuli that were generated from these latent variables. In this way, we evaluated how well a neural representation is able to extract information about latent variables from the stimuli and use this for arbitrary downstream tasks. 

To connect the mesoscopic statistics of neural activation patterns to generalization performance, we decomposed our formula for the generalization error into four geometric terms. These terms summarize the relationship between population activity and the latent variables and completely determine generalization performance in our task. First is the dimension of the entire population activity, as measured by the participation ratio \cite{jazayeri2021interpreting}. Second, we introduced the total correlation between neuronal activity and the latent variables, representing the overall sensitivity of neuronal responses to changes in the latent space.  Third, we introduced the signal-signal factorization, measuring the degree to which each latent variable is coded in an orthogonal direction in the neural state space, with adequate variance devoted to each coding direction. Finally, we defined the signal-noise factorization, which summarizes the amount of signal-unrelated variability that lies along the directions in state space used to represent the latent variables. Hence, this term summarizes the degree to which noise correlations affect the readout considered here \cite{averbeck2006neural,bartolo2020information, cohen2009attention, ni2018learning}. Together, these four terms completely characterize the generalization performance of the linear readout applied to the neural population responses. 

We next analytically calculated which geometries minimize the generalization error, given a fixed latent structure and number of available training samples. This falls in line with studies on efficient coding in which predictions of sensory physiology are derived from statistics of the natural environment \cite{ganguli2014efficient, atick1990towards,doi2012efficient}. In this way, we found that disentangled representations naturally emerge as an optimal solution to the multi-task learning problem. Specifically, we found that the eigenvectors of the neuronal and latent covariances match the singular vectors of the cross-covariance (Methods, SM). This is reminiscent of work done on the successor representation and grid cells \cite{stachenfeld2017hippocampus, fang2023neural, sorscher2023unified, momennejad2017successor, momennejad2020learning}, and it would be interesting to see if there are additional constraints under which grid cells or other coding properties emerge as an optimal code in our model \cite{xie2023task}.  Future work can also extend our theory to consider sets of tasks that compete with one another as in previous studies of multi-task learning \cite{musslick2018efficiency, petri2021topological}. Moreover, given recent findings regarding multi-task learning and dynamical motif reuse in recurrent networks \cite{yang2019task, driscoll2022flexible}, it would be interesting to extend our theory to a dynamical setting in which predictions are formed using an appropriate readout of time-varying firing rates.

We then described how the optimal neuronal code shifts from being low to high dimensional as the number of available samples increases. This finding reflects a strategy of compressing less useful information when data is scarce and expanding it when data is abundant. It would be very interesting to test this normative prediction in behaving animals. This could be done, for example, using multi-unit recordings taken while an animal learns to perform decision making tasks in which multiple latent variables (e.g. distinct context cues) must be used to make a decision. Our theory predicts that task-related neural variability in the early stages of training should primarily track highly informative latent variables. As training progresses, we predict that task-relevant neural variability will increase in dimension as it begins to track the less informative latent variables in the task. 

It is interesting to compare these findings to previous work reporting that higher dimensional representations of individual object classes are preferred for invariant object recognition in the few shot learning setting \cite{sorscher2022neural}. Although the contribution to the error of the neural dimension scales as 1/ p in our formula just as in this work, the additional contribution to the error of the signal-signal and signal-noise factorization terms together with the constraint of  of the covariance matrix remaining positive semi-definite, leads to optimal representations with \emph{lower} dimension in the few shot regime. These results highlight the fact that different geometric terms may compete with each other in ways that cannot be directly read off from generalization error equations when there are additional constraints imposed on the system. Another related line of work has considered the role of neural dimension in random pattern separation \cite{babadi2014sparseness, litwin2017optimal, muscinelli2023optimal}. In these settings, the optimal dimension can decrease when neural noise increases, and future work can examine whether a similar relation holds in the multi-task learning setting considered here. Finally, we note that theories of least squares estimation suggest that similar phenomena occur for regression tasks with more complex readouts \cite{bordelon2020spectrum, canatar2021spectral, bordelon2022population, canatar2023spectral}. We suspect that a similar spectral expansion trend of optimal coding schemes holds for other common readout mechanisms in both classification and regression tasks, and it would be interesting to investigate these questions in future work. 

To test our theory, we next considered a setting in which a set of Gaussian latents are fed through a random network, and task labels are generated as above in the latent space. A four hidden layer MLP was then trained on the outputs of the random network on these tasks \cite{goldt2020modeling}. We then evaluated each layer of the trained network on previously unseen tasks using the scheme described above. Here, we found that intermediate layers of the trained network successfully learned to represent the latent variables in a way that could be used for downstream tasks, in line with previous work \cite{johnston2023abstract}. Furthermore, we found a strong correspondence between our theoretical predictions and the empirical generalization errors. Applying our geometric decomposition of the error, we found that linear and relu layers obtained similar generalization errors but with sharply different underlying geometries. Specifically, relu layers yielded a large boost to the dimension at the cost of the total correlation, while the signal-signal and signal-noise factorizations steadily improved through the network. This suggests a strategy in which linear layers attempt to place neuronal noise in regions of the state space that are then squashed by the non-linearity \cite{keup2022origami}. Tracking the evolution of these terms through training, we found that the correlation in relu layers \emph{decreases} through training, while the dimension, signal-signal factorization, and signal-noise  factorization increase. This mirrors the behavior of an optimal code as more samples are presented as well as the behavior of deep linear networks trained on certain regression tasks  \cite{saxe2013exact,saxe2019mathematical}. It would be interesting to apply our geometric decomposition to analyze disentanglement in more complex artificial networks. 

Finally, we applied our theory to electrophysiological recordings from macaque V4 and IT \cite{majaj2015simple}. Specifically, we used our theory to predict the generalization error of a linear readout of neural responses trained on a series of binary classification tasks. We formed task labels using latent variables describing object position, size, and orientation. As before, our theory showed excellent agreement with empirical generalization errors of linear readouts applied to the neural responses. We found that both regions showed better readout performance than the pixels themselves \cite{hong2016explicit}. While we found a low signal-signal factorization in these neural responses relative to the pixels, we did find that signal-signal factorization improved from V4 to IT \cite{lindsey2023factorized}. Moreover, we found a sharp increase in the signal-noise factorization from the pixels and V4 to IT, suggesting that latent-unrelated variability may become increasingly orthogonal to the coding directions through the ventral stream. It would be very interesting to repeat such an analysis using data from visual regions specialized for localization. Overall, these results demonstrate the applicability of our theory as a data-analytic tool. 

While we focused on tasks that came from shattering a continuous latent space, future work could extend this theory to consider distributions of latents and tasks that more closely mirror common  experimental settings. Specifically, one could repeat our general calculation, while restricting the distribution of task vectors to a handful of relevant directions or restricting the distribution of latent variables to be fixed to discrete values. This would be particularly interesting in settings where only particular groups of tasks or latent variable values are relevant for an animal to consider. Such a calculation would allow one to tie the population geometry to the linear readout performance on families of tasks and latent distributions which are tailored to specific experimental settings. We hope to pursue this line of research in subsequent work. 


\section{Methods}

\subsection{Model of multi-task learning}
\label{sec:setup}
We model a setting in which an agent learns a set of binary classification tasks by performing a linear readout on a set of neural population activities. Formally, we assume that each stimulus is associated with a $d$-dimensional latent variable $ z_\mu \in \mathbb{R}^d $, with $ 1 \leq \mu \leq p $ denoting the sample index. The labels for a specific task are formed by shattering the latent space using a hyperplane with a normal vector, $ T $. Thus, we have that the binary classification task labels, $y_\mu,$ satisfy the relation, $y_\mu=\mathrm{sign}(T \cdot z_\mu) $, so that each $ T $ vector defines a specific classification task. Associated with each latent, we also consider $n$-dimensional neural activity patterns, $x_\mu \in \mathbb{R}^n$. From these data, the agent then forms predictions of new data points using a supervised Hebbian readout rule \cite{engel2005stat}. More precisely,  given a new stimulus associated with latent variables $z_+$ and firing rates $x_+$, the agent forms a prediction $\hat y_+$ using the rule

\begin{gather}
    \hat y_+ = \mathrm{sign}(w \cdot x_+), 
\quad 
    w= \frac 1 p \sum_\mu y_\mu x_\mu . 
\end{gather} 
When the labels are balanced, this corresponds to using the difference in the mean activities for positively and negatively labeled examples (Fig. \ref{fig:schem}f). We evaluate how well the neural code can support downstream classification by calculating the generalization error of these predictions, averaged across different tasks (i.e., different $T$ vectors), though we also calculate the error for a fixed $T$ along the way (SM). 

While we validate our theory on a wide range of data, we obtain our analytical results using a simplified Gaussian model. This is inspired by work on the Gaussian Equivalence Principle (GEP) in deep learning theory \cite{goldt2022gaussian, goldt2020modeling, loureiro2021learning}. Formally, we assume that each pair of neural responses and latent variables, $(x,z)$ are jointly zero-mean Gaussians with covariance matrices:

\begin{gather} 
	\mathbb E [xx^\top]= \Psi, \quad \mathbb E [xz^\top] = \Phi, \quad \mathbb E[zz^\top] = \Omega.
\end{gather}
We can see that $ \Psi $ describes the neuron-neuron covariances, $ \Phi $ contains the covariances between single-unit responses and variations in the latent variables, and $ \Omega $ describes covariances between latent variables in the dataset. Note that since the $T$ vectors are chosen randomly from a Gaussian distribution, the latent covariance determines which directions in the latent space are, on average, most informative of the task labels. Directions in the latent space which have a significant amount of variance are typically more informative of the task labels in this setup. This model corresponds to a variant of the popular student-teacher model \cite{gardner1989three, loureiro2021learning}. Insofar as the GEP holds, the neuronal and latent (cross-)covariances fully specify the generalization error of the linear readout. 

\subsection{Geometric decomposition of generalization error across tasks}

We prove a formula for the generalization error of the linear readout, averaged across different binary classification tasks (i.e., different $T$ vectors). To do so, we consider the limiting case in which the number of neurons, latent dimensions, and training samples are all large and of comparable size. Furthermore, we assume that the covariance matrices given above satisfy certain spectral properties (SM). Under these assumptions, we show that the average generalization error, $E_g$, is a decreasing function of the four geometric terms. Formally, we have
\begin{gather}
    E_g = \frac 1 \pi \tan^{-1} \bigg(\sqrt{ \frac{\pi}{2pc^2 \mathrm{PR}(\Psi)} + \frac {1} {f} + \frac 1 s -1 }\bigg),
    \label{eq:eg}
\end{gather}
where we have introduced the total neural-latent correlation $ c $, the signal-signal factorization $f$, the signal-noise factorization $s$, and the dimension of the population responses as measured by the participation ratio, $ \mathrm{PR}(\Psi)$. These terms correspond to statistics of the covariance matrices specified above: 

\begin{align}
    c&=\frac{\tr(\Phi\Phi^\top)}{\tr(\Psi)\tr(\Omega)}
    \\
    \PR(\Psi)&=\frac{\tr(\Psi)^2}{\tr(\Psi^2)}
    \\
    f &= \frac{\tr(\Phi\Phi^\top)^2}{\tr(\Omega)\tr(\Phi^\top \Phi \Omega^{-1} \Phi^\top \Phi)}
    \\
    s &= \frac{\tr(\Phi\Phi^\top)^2}{\tr(\Omega)\tr(\Phi^\top(\Psi - \Phi\Omega^{-1}\Phi^\top)\Phi)}
\end{align}
Since the function $ F(w) = \frac 1 \pi \tan^{-1}\big(\sqrt{w-1}\big)$ is strictly increasing, we can see that each of these geometric terms should be made as large as possible. As discussed in the SM, the term $f$ measures the overall degree of orthogonality between coding directions, while $s$ measures the amount of noise that lies along the signal directions. The neural noise is described by the noise covariance matrix, $\mathrm{cov}(x|z) =\Psi - \Phi\Omega^{-1}\Phi^\top,$ which appears in the definition above. Importantly, this neural noise may include variability that is related to latent variables that are not measured experimentally. Motivated by this observation, we describe how these two factorization terms can be collapsed into a single factorization term in the SM. Note that in writing Eq. \eqref{eq:eg}, we have separated a term that depends on the number of training samples, $1/[c^2\PR(\Psi)]$ from a term which is \emph{independent of the sample size}, $1/f + 1/s$. Thus we can see that the correlation and dimension terms become less important as $p$ grows.

\subsection{Gaussian simulations}

To  test our theory on data points which violate the assumptions of our theory, we began by sampling from a finite Gaussian model. Specifically, we drew latent variables, $z_\mu$, from a multivariate Gaussian distribution whose covariance matrix had eigenvalues that decayed as a power law with rate $\alpha$. Specifically, we set: $ \omega_i = 5i^{-\alpha} $, where $ \alpha $ was the power law of the spectrum, and $\omega_i$ is the $i$th eigenvalue of the latent covariance.   The neural responses were then given by the formula $ x_\mu = A z_\mu $ for a random, $ n\times d $ Gaussian matrix $ A $ with i.i.d. elements, or by applying a whitening transform $ x_\mu = \Omega^{-1/2} z_\mu $. We set $ n = 80 $ and $ d= 40 $ for these simulations. For a fixed training set, we sampled $ N_{task} = 300 $ task $ T $ vectors and calculated the generalization error across all tasks using a set of new latent variables.  Finally, we averaged over this entire procedure 30 times to generate the markers in Fig \ref{fig:cloud}. 

\subsection{Optimal codes}

We derive which neuronal codes achieve the lowest generalization error, given a fixed number of samples to train on and a fixed latent structure. Specifically, we calculate which neuron-neuron and neuron-latent covariance matrices,  $\Psi$, $\Phi$ achieve the lowest multi-task generalization error, given fixed a fixed latent covariance $\Omega$ and training set size $p$ (SM). We do this by optimizing Eq \eqref{eq:eg} with respect to $ \Psi $ and $ \Phi $ subject to the constraint that the entire covariance matrix between neurons and latents be positive semi-definite, a necessary condition for the code to be realizable. Using this approach, we find that the left and right singular vectors of $ \Phi $ are the eigenvectors of $ \Psi $ and $ \Omega $, respectively for the optimal code. This shows that independent directions in the latent space map directly onto independent directions in the firing rate space (SM). Furthermore, we obtain the following simple formula for the eigenvalues of the optimal neural code. Denoting the eigenvalues of $ \Psi $ and $ \Omega $ as $ \psi_i $ and $ \omega_i $ respectively, we have up to a permutation symmetry: 
\begin{gather}
    \psi_i =C  \frac{\omega_i}{2p\omega_i + \pi\sum_k \omega_k},
\end{gather}
where $C$ is an arbitrary constant. As $ p $ grows, we can see that the spectrum becomes flatter, reflecting the expansion strategy, while as $ p $ shrinks, the spectrum decays faster and faster, reflecting the fact that less informative directions are compressed in the state space (Fig. \ref{fig:opt-rep}c).

To validate our calculation, we numerically calculated the optimal code by optimizing on the space of positive semi-definite matrices. Since the full covariance matrix is positive semi-definite, there must exist matrices $ X_1 $ and $ X_2 $ such that
\begin{gather}
  L = \begin{pmatrix} \Omega^{1/2} & 0 \\ X_1 & X_2\end{pmatrix},
  \quad 
  LL^T = \begin{pmatrix} \Omega & \Phi^T \\ \Phi & \Psi \end{pmatrix}
\end{gather}
The space of possible $ X $ matrices is unconstrained, so we simply optimize Eq \eqref{eq:eg} with respect to the $X_i$ matrices and calculate $ \Psi $ and $ \Phi $ after the fact.

\subsection{MLP Experiments}

We used random and trained MLPs to test several predictions from our theory using explicitly non-Gaussian artificial neural response data. To generate these data, we first sampled a set of $ d=40 $ dimensional latent variables from a multivariate Gaussian distribution with eigenvalues $ \omega_k = k^{-0.2} $. For these experiments, we sampled $ 5 \cdot 10^5$ latent variables independently from this distribution as a training dataset. Using these latent variables we generated a set of $ N_t = 500$ tasks by randomly shattering the latent space. Latent variables were then passed through a $ 3$ layer perceptron with randomly initialized weights. We chose the size of each intermediate layer to be twice the size of the previous one to ensure the dimension of the representation did not decrease. Each layer was composed of a linear transform, followed by a relu non-linearity (see SM Figs. 2\&3 for results using a tanh non-linearity). After sampling the task labels and passing the latents through the random MLP, we trained a $ 4 $ hidden-layer MLP to predict task labels from the random MLP responses. The trained network was similarly composed of linear-batchnorm-relu blocks, and we used the adam optimizer to train the network \cite{kingma2017adam} through a single epoch. This setup corresponds to a multi-task version of the hidden manifold modeling framework studied in deep learning theory \cite{goldt2020modeling}.

We evaluated the generalization error through layers and training using a new set of  latents and tasks. Specifically, we sampled $10^3$ latent variables from the same distribution as well as a new set of $300$ randomly selected tasks. To calculate the theoretical generalization error , we then used the network representations of these new latent variables to calculate the geometric metrics and evaluate the theoretical generalization errors  using Eq. \eqref{eq:eg} with $p=300$.   To evaluate the empirical generalization error, we randomly split the new set of latents into a set of 300 training points and 700 test points. This process was repeated to calculate the average test error of the Hebb rule across each layer and time point in training. Finally, we averaged over train/test splits to generate the markers shown in Fig. \ref{fig:mlp1}. 

\subsection{Macaque analyses}

We drew from a publicly available dataset containing multi-unit recordings from V4 and IT taken from 2 monkeys \cite{majaj2015simple}. These recordings were taken as the monkeys viewed visual stimuli as described in \cite{majaj2015simple} and contained 88 V4 and 168 IT neural sites, though we reproduce all results projecting IT and pixel responses down to 88 randomly chosen dimensions (SM Fig. 4). The stimuli for this task were drawn from a generative model in which a total of 64 objects coming from 8 categories were displayed against varying backgrounds. These images were generated by varying $ d =6 $ continuous latent variables that controlled the object size, angle, and position. Latent variables were drawn iid from a mean zero uniform distribution.

We tested the linear decodability of these latent variables from the neural firing rates using the scheme described in Sec. \ref{sec:setup}. For the raw pixels, we first carried out a Gaussian random projection onto a $5,000$ dimensional space before applying this scheme. Specifically, for each task condition and for each of the 8 object category types, we formed task labels for $N_t=300 $ tasks by randomly shattering the latent space. For each condition and category type, there were $ 320 $ examples per condition and object category type. We z-scored the latent variables to ensure that there was no especially informative latent variable. We then formed predictions using the supervised Hebbian readout described in Sec. \ref{sec:setup} after mean centering the neuronal firing rates. This procedure was repeated over 15 different train/test data splits. To generate the results in Fig. \ref{fig:maj}, we then averaged the generalization error across the $N_t$ classification tasks, data splits, conditions, and image categories. Error bars denote the standard error of the mean for the estimated generalization error across the $ c = 8 $ category types. In SM, we also show the geometry and generalization errors for each of the categories individually (SM Figs. 5\&6), as well as the error obtained by pooling all categories together (SM Fig. 7).

\section*{Acknowledgements} 

We thank Thomas Yerxa, Chi-Ning Chou, Arnav Raha, and Jenelle Feather for helpful discussions and comments on an earlier version of this manuscript. This work was funded by the Center for Computational Neuroscience at the Flatiron Institute of the Simons Foundation. S.C. is also partially supported by a Sloan Research Fellowship, and a Klingenstein-Simons Award. All experiments were performed on the high-performance computing cluster at the Flatiron Institute.

\bibliographystyle{unsrt}
\bibliography{references}

\end{document}


\maketitle

\tableofcontents

\section{Proof of generalization error formula}

Here we state and prove our formula for the generalization error.

\begin{theorem}
    Let $x_{n,\mu} \in \mathbb{R}^n,z_{n,\mu} \in \mathbb{R}^d$ with $1\leq\mu\leq p$ be a sequence of random variables drawn independently from a Gaussian distribution with mean zero and covariance matrices:

    \begin{gather}
    C_n= 
        \begin{pmatrix}
        \Psi_n & \Phi_n
        \\
        \Phi_n^\top & \Omega_n
        \end{pmatrix},
    \end{gather}
    where $C_n$ is a positive-definite matrix with sub-matrices $\Psi_n \in \mathbb{R}^{n\times n}, \Omega_n \in \mathbb{R}^{d\times d}, \Phi_n \in \mathbb{R}^{n\times d}$, and the ratios $d/n = \alpha $ and $p/n = \beta$ are held fixed.  In addition, let $x_{n,+}, z_{n,+}$ be a pair of  samples drawn independently from the same distribution, and let, $r_n$ be a sequence of $d$-dimensional vectors drawn uniformly from the surface of the sphere of radius $\sqrt{d}$. Assume that there exist positive constants $c_1, c_2$ such that

    \begin{gather}
        \frac{c_1}{n} \leq \psi_{n,i}, \omega_{n,i}, \phi_{n,i} \leq \frac{c_2}{n},
    \end{gather}
    where we use lower-case Greek letters to denote the eigenvalues or singular values of the respective matrices. Define the Hebbian readout $w := \frac 1 p \sum_\mu \sgn(\langle z_{n,\mu}, r_n \rangle) \langle x_{n,\mu}, x_{n,+} \rangle,$ and let $\Theta(\cdot)$ denote the Heaviside step function.  Under these assumptions the generalization error is given by

    \begin{gather}
    \mathbb E_{x_n,z_n,r_n}\big\{ \Theta(-\langle r_n, z_{n,+}\rangle \langle w, x_{n,+} \rangle)\big\}  
    \\
    =  
    \frac 1 \pi \tan^{-1}\bigg( \sqrt{
    \frac{\tr(\Omega_n) \big[
    \frac \pi p \tr(\Psi_n^2)\tr(\Omega_n) 
    + 2 \tr(\Phi_n^\top \Psi_n \Phi_n) 
    ]}{2\tr(\Phi_n^\top \Phi_n)^2}
    -1}
    \bigg) 
    + O(n^{-1/2}) .
    \nonumber
    \end{gather}
\end{theorem}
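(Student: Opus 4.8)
The plan is to condition on everything except the fresh test pair, reducing the error to a bivariate Gaussian orthant probability, and then to prove concentration of the resulting (random) correlation coefficient. Write $s_\mu := \sgn\langle z_{n,\mu}, r_n\rangle$, $\bar w := \tfrac1p\sum_{\mu} s_\mu x_{n,\mu} \in \mathbb R^n$, so that the scalar entering the error is $v := \langle w, x_{n,+}\rangle = \tfrac1p\sum_\mu s_\mu\langle x_{n,\mu}, x_{n,+}\rangle = \langle \bar w, x_{n,+}\rangle$, and set $u := \langle r_n, z_{n,+}\rangle$. Let $\mathcal F := \sigma\big(r_n, \{x_{n,\mu}, z_{n,\mu}\}_{\mu\le p}\big)$. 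Since $(x_{n,+},z_{n,+})\sim\mathcal N(0,C_n)$ is independent of $\mathcal F$ and $u,v$ are linear in it with $\mathcal F$-measurable coefficients, conditionally on $\mathcal F$ the pair $(v,u)$ is centered bivariate Gaussian with second moments $\E[v^2\mid\mathcal F]=\bar w^\top\Psi_n\bar w$, $\E[u^2\mid\mathcal F]=r_n^\top\Omega_n r_n$, $\E[uv\mid\mathcal F]=\bar w^\top\Phi_n r_n$. Using the classical orthant identity $\Pr(uv<0)=\tfrac1\pi\arccos\rho$ for a centered bivariate normal of correlation $\rho$, the tower property gives the exact identity $\E\{\Theta(-uv)\}=\E\big[\tfrac1\pi\arccos\hat\rho\big]$ with $\hat\rho := \bar w^\top\Phi_n r_n\big/\sqrt{(\bar w^\top\Psi_n\bar w)(r_n^\top\Omega_n r_n)}$.

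It then remains to show $\hat\rho$ concentrates. I would use: (i) $r_n$ uniform on the sphere of radius $\sqrt d$ satisfies $\E[r_nr_n^\top]=I_d$ and $\var(r_n^\top M r_n)\le 2\tr(M^2)$, so with $M\in\{\Omega_n,\Phi_n^\top\Phi_n,\Phi_n^\top\Psi_n\Phi_n\}$ the spectral bounds give $r_n^\top M r_n=\tr(M)\,(1+O(n^{-1/2}))$ in probability (and, incidentally, $r_n^\top\Omega_n r_n\ge c_1\alpha$ deterministically, so the denominator of $\hat\rho$ stays bounded below); (ii) conditionally on $r_n$ the summands $s_\mu x_{n,\mu}$ are i.i.d., and the Gaussian identity $\E[X\,\sgn Y]=\sqrt{2/(\pi\var Y)}\;\mathrm{Cov}(X,Y)$ (for jointly Gaussian $(X,Y)$, $Y$ scalar) applied with $Y=\langle z_{n,\mu},r_n\rangle$ gives $\E[\bar w\mid r_n]=\sqrt{2/(\pi r_n^\top\Omega_n r_n)}\;\Phi_n r_n=:\bar\mu$ and $\mathrm{Cov}(\bar w\mid r_n)=\tfrac1p(\Psi_n-\bar\mu\bar\mu^\top)$; (iii) a second-moment bound for the average over $\mu$, controlled via the spectral hypotheses, which lets me replace $\bar w$ by $\bar\mu$ with negligible error. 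Together these yield $\bar w^\top\Phi_n r_n=\sqrt{2/(\pi\tr\Omega_n)}\,\tr(\Phi_n^\top\Phi_n)+O(n^{-3/2})$, $\bar w^\top\Psi_n\bar w=\tfrac{2}{\pi\tr\Omega_n}\tr(\Phi_n^\top\Psi_n\Phi_n)+\tfrac1p\tr(\Psi_n^2)+o(n^{-2})$, and $r_n^\top\Omega_n r_n=\tr\Omega_n+O(n^{-1/2})$; substituting and simplifying gives $\hat\rho=\rho_\star+O(n^{-1/2})$ in probability, where $\rho_\star^2 = 2\tr(\Phi_n^\top\Phi_n)^2\big/\big(\tr(\Omega_n)[\tfrac\pi p\tr(\Psi_n^2)\tr(\Omega_n)+2\tr(\Phi_n^\top\Psi_n\Phi_n)]\big)$ is exactly the quantity under the inner square root in the theorem.

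To conclude, I would note that positive-definiteness of $C_n$ makes each conditional covariance matrix of $(v,u)$ nondegenerate, and the strictly positive term $\tfrac\pi p\tr(\Psi_n^2)\tr(\Omega_n)$ (bounded below by a constant, since $p=\beta n$ and the eigenvalues are $\Theta(1/n)$) forces $\rho_\star\le 1-\epsilon$ for a fixed $\epsilon>0$; on the high-probability event where $\hat\rho$ is within $O(n^{-1/2})$ of $\rho_\star$ it too stays bounded away from $\pm1$, where $\arccos$ is Lipschitz, so $\arccos\hat\rho=\arccos\rho_\star+O(n^{-1/2})$ there. Since $\arccos\hat\rho\in[0,\pi]$ always, taking expectations gives $\E\{\Theta(-uv)\}=\tfrac1\pi\arccos\rho_\star+O(n^{-1/2})$. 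Finally, the identity $\arccos t=\tan^{-1}\sqrt{t^{-2}-1}$ for $t\in(0,1]$ rewrites $\tfrac1\pi\arccos\rho_\star$ as the claimed $\tan^{-1}$ expression.

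The step I expect to be the main obstacle is (iii) together with the bookkeeping of the second paragraph: propagating the concentration through the nonlinearity $\sgn$, through the empirical average over $\mu$, and through the two square roots in $\hat\rho$, while checking --- under only the hypotheses $c_1/n\le\psi_{n,i},\omega_{n,i},\phi_{n,i}\le c_2/n$ --- that every remainder term is genuinely $O(n^{-1/2})$, and establishing the uniform spectral gap $\rho_\star,\hat\rho\le 1-\epsilon$ needed for the Lipschitz transfer.
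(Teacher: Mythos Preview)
Your proposal is correct and takes a genuinely different route from the paper. The paper conditions on the \emph{test} pair and $r_n$, applies Berry--Esseen to the training sum $\sum_\mu\gamma^\mu_n$ to get an approximate Gaussian, integrates out the test pair via Gaussian--erfc identities to reach the $\tan^{-1}$ form, and finally replaces the $r_n$-quadratic forms by their traces using Hanson--Wright (Lemmas~2--4). You instead condition on the \emph{training} data and $r_n$, so that $(u,v)$ is \emph{exactly} bivariate Gaussian and the inner expectation is the closed-form orthant probability $\tfrac1\pi\arccos\hat\rho$; then you show $\hat\rho$ concentrates and transfer through the Lipschitz $\arccos$. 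Your approach bypasses Berry--Esseen and the chain of erfc integrals entirely, at the cost of having to control $\bar w^\top\Psi_n\bar w$, a quadratic form in the non-Gaussian vector $\bar w$ --- a step the paper never faces because it conditions the other way. Second-moment bounds (as you outline) handle this without difficulty under the spectral hypotheses.

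One remark on the obstacle you flag. Your argument that the extra term $\tfrac\pi p\tr(\Psi_n^2)\tr(\Omega_n)$ forces $\rho_\star\le 1-\epsilon$ is not quite self-contained: knowing only that this term is $\Theta(n^{-2})$, like the numerator $2\tr(\Phi_n^\top\Phi_n)^2$, does not by itself prevent $\rho_\star^2>1$. You also need $\tr(\Omega_n)\tr(\Phi_n^\top\Psi_n\Phi_n)\ge\tr(\Phi_n^\top\Phi_n)^2$, which is exactly the paper's Lemma~4 (a one-line Cauchy--Schwarz: $\tr(\Phi^\top\Phi)=\E\langle\Phi^\top x,z\rangle\le\sqrt{\E\|\Phi^\top x\|^2\,\E\|z\|^2}$). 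With that in hand, $\rho_\star^2\le A/(A+B)$ with $B/A$ bounded below, giving the gap; the same inequality applied pointwise (the paper's Lemma~2) gives $|\hat\rho|\le1$ deterministically, and then your high-probability plus boundedness argument converts to the $O(n^{-1/2})$ expectation bound.
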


\begin{proof}

We start by defining the random variables

\begin{gather}
    \gamma^\mu_n = \sgn(\langle z_{n,\mu}, r_n \rangle) \langle x_{n,+}, x_{n,\mu} \rangle, 
\end{gather}
so that the generalization error may be written 

\begin{gather}
    \mathbb E_{r_n} \mathbb E_{x_{n,+}, z_{n,+}} \mathbb E_{x_{n,\mu}, z_{n,\mu}} \Theta\bigg(-\sgn(\langle z_{n,+} r_n \rangle) \sum_\mu \gamma_n^\mu \bigg).
    \label{eq:gamm}
\end{gather}

Note that we have used Fubini's theorem to separate the expectations over the training and test points, as well as the teacher vector $r_n$. 

\indt Our goal is now to show that the distribution of the random variable $\sum_\mu \gamma^\mu_n$ is approximately Gaussian. We do this using the Berry-Esseen theorem, which allows us to carry out the inner expectation in Eq. \ref{eq:gamm}.  We can obtain the relevant moments needed to apply this theorem using the identity  

\begin{gather}
    \mathbb E_{p,q \sim \mathcal N (0, \Sigma) } 
    \sgn(q)p = \sqrt{\frac{2}{\pi \sigma}} \kappa , 
    \quad \Sigma:= \begin{pmatrix}
        \zeta & \kappa \\ \kappa & \sigma 
    \end{pmatrix}, 
\end{gather}
together with 

\begin{gather}
    \mathbb E_{h\sim \mathcal{N}(0, v)} |h|^3 = \frac{2\sqrt 2}{\sqrt{\pi}} v^{3/2}.
\end{gather}
The mean and variance are simply 

\begin{gather}
    \mathbb E_{x_{n,\mu}, z_{n,\mu}} \gamma^\mu_n = \sqrt{\frac{2}{\pi r_n^\top \Omega_n r_n} }x_{n, +}^\top \Phi_n r_n, 
    \\
    \mathbb E_{x_{n,\mu}} (\gamma^\mu_n -\mathbb E_{x_{n,\mu}, z_{n,\mu}} \gamma^\mu_n )^2= x_{n,+}^\top \Psi_n x_{n,+}
    - \frac{2 (x_{n,+}^\top \Phi_n r_n)^2}{\pi r_n^\top \Omega_n r_n}.     
\end{gather} 

It suffices to bound the third central moment as

\begin{gather}
    \mathbb E_{x_{n,\mu}, z_{n,\mu}} |\gamma^\mu_n  -\mathbb E_{x_{n,\mu}, z_{n,\mu}} \gamma^\mu_n|^3 \leq \frac{2\sqrt 2}{\sqrt{\pi}} \big( x_{n,+}^\top \Psi_n x_{n,+}\big)^{3/2} + 
    |\mathbb E_{x_{n,\mu}, z_{n,\mu}} \gamma^\mu_n|^3.
\end{gather}

We now define the following random variables: 

\begin{gather}
    \sigma_n := r_n^\top \Omega_n r_n, 
    \quad 
    \kappa_n := x_{n,+}^\top \Phi_n r_n, 
    \quad 
    \zeta_n := x_{n,+}^\top \Psi_n x_{n,+},
    \quad 
    \eta_n := z_n^\top r_n.
\end{gather}

If we now let $F_n(s)$ denote the cumulative distribution fnction of the standardized sum 

\begin{gather}
    s=\frac {\sum_\mu (\gamma^\mu_n  - \mathbb E_{x_{n,\mu}} \gamma^\mu_n ) } {\sqrt{p(\zeta_n
    - \frac{2 \kappa_n^2}{\pi\sigma_n })}}
\end{gather}
and denote by $N(s)$ the standard normal c.d.f., the Berry-Esseen theorem \cite{raic2019berry} gives 
 
\begin{gather}
    \bigg|F_n(s) - N(s)\bigg| \leq \frac{K}{\sqrt{p}\big(1 - \frac{2\kappa_n^2}{\pi\sigma_n  \zeta_n}\big)^{3/2}}
    \big( 1 +(x_{n,+}^\top \Psi_n x_{n,+})^{-3/2} |\mathbb E_{x_{n,\mu}, z_{n,\mu}} \gamma^\mu_n|^3\big)
    ,
    \label{eq:bound}
\end{gather}
where $K$ is a constant. We can see that the rightmost term contributes at most a constant factor from 

\begin{gather}
    \frac{|\mathbb E_{x_{n,\mu}, z_{n,\mu}} \gamma^\mu_n|^3}{(x_{n,+}^\top \Psi_n x_{n,+})^{3/2}}
    \leq  \frac{M|x_{n,+}^\top \Phi_n r_n|^3}{(r_n^\top \Omega_n r_n)^{3/2}(x_{n,+}^\top \Psi_n x_{n,+})^{3/2}} \leq M' \frac{\phi_{n,\max}^3}{\omega_{n,\min}^{3/2} \psi_{n,\min}^{3/2}} = O(1), 
\end{gather}

for some positive constants $M,M'$. Next we show that the entire right hand side of Eq. \eqref{eq:bound} is $O(n^{-1/2})$ using the following Lemma.

\begin{lemma}
    Let $C$ by a positive definite matrix with submatrices
    \begin{gather}
        C = 
        \begin{pmatrix} 
        \Psi & \Phi 
        \\ \Phi^\top & \Omega 
        \end{pmatrix},
\end{gather}
with $\Psi \in \mathbb R^{n \times n}, \Omega \in \mathbb R^{d\times d},$ and $\Phi \in \mathbb{R}^{n\times d}$. Then for all $h \in \mathbb R^n$ and $r \in \mathbb R^d$ 
\begin{gather}
    \frac{(h^\top \Phi r)^2}{h^\top \Psi h r^\top \Omega r }\leq 1. 
\end{gather}
\label{lem:ineq}
\end{lemma}
\begin{proof}
    Let $(x,z)$ be mean-zero jointly Gaussian vectors with covariance matrix $C$. It follows that
    \begin{align}
        (h^\top \Phi r)^2 &= \big( \mathbb E[ \langle x, h \rangle \langle r, z \rangle])^2 
        \\
        &\leq \mathbb E[\langle x,h \rangle^2] \mathbb E[\langle z, r \rangle^2]
        \\
        &= (h^\top \Psi h )(r^\top \Omega r).
    \end{align}
\end{proof}

We therefore obtain

\begin{gather}
    \frac{2\kappa_n^2}{\pi\sigma_n  \zeta_n} \leq \frac 2 \pi ,
\end{gather}

so that the right hand side of Eq. \eqref{eq:bound} is $O(n^{-1/2})$. Keeping only the leading order terms, we obtain the following expression for the generalization error

\begin{gather}
    \frac 1 2 \mathbb E_{r_n} \mathbb E_{x_{n}, z_{n}} 
    \erfc\bigg(\frac{\sgn(\eta_n)\kappa_n}{\sqrt{\frac \pi p\sigma_n  \zeta_n(1-\epsilon)}}\bigg) 
    + O (n^{-1/2}) ,
    \\
    \epsilon := \frac{2 \kappa_n^2}{\pi\sigma_n  \zeta_n}.
\end{gather}
Note that we have dropped the $+$ subscript from the test point and latent pair. We can deal with the term $\epsilon \leq 2/\pi$ by noting that 

\begin{gather}
    \mathbb E_{x_n, r_n} \frac{2 \kappa_n^2}{\pi\sigma_n  \zeta_n} \leq 
   \mathbb E_{s,r_n} \frac{2 (s^\top \Psi^{1/2} \Phi_n r_n)^2}{\pi d \psi_{n,\min} \omega_{n,\min}}
    = O(n^{-1}), 
\end{gather}

where $\psi_{n,\min}, \omega_{n,\min}$ are the smallest eigenvalues of $\Omega_n, \Psi_n$, and $s$ is a random vector drawn uniformly from the surface of the unit sphere. From here we can simply expand to first order to see that the $\epsilon$ term contributes at most $O(n^{-1}).$ 

\indt We now show that we may replace the quadratic form involving $x_n$ with its expected value. To do this, we first note that we can focus on estimating the expectation over a region which in which $|\zeta_n - \tr(\Psi_n^2)| > c\tr(\Psi_n^2)$ for some $0<c<1$ by applying the Hanson-Wright inequality. This gives 

\begin{gather}
    \bigg |\mathbb E_{r_n, x_n, z_n} \bigg[\erfc\bigg(
    \frac{\sgn(\eta_n )\kappa_n}{\sqrt{\frac \pi p \sigma_n\tr(\Psi_n^2)}}
    \bigg) - \erfc\bigg(
    \frac{\sgn(\eta_n )\kappa_n}{\sqrt{\frac \pi p \sigma_n\zeta_n}}
    \bigg)\bigg ] \bigg| 
    \\
    \leq  
   \mathbb E_{r_n, x_n, z_n}
   \mathbf{1}_{\{|\zeta_n - \tr(\Psi_n^2) | < c\tr(\Psi_n^2)\}}
    \bigg| \erfc\bigg(
    \frac{\sgn(\eta_n )\kappa_n}{\sqrt{\frac \pi p \sigma_n\tr(\Psi_n^2)}}
    \bigg) - \erfc\bigg(
    \frac{\sgn(\eta_n )\kappa_n}{\sqrt{\frac \pi p \sigma_n\zeta_n}}
    \bigg) \bigg|
     + O(e^{-g \sqrt n}) ,
    \nonumber
\end{gather}
for some constant $g$ that depends on the choice of $c$. We can then obtain the following bound for the integrand in the region where $|\zeta_n - \tr(\Psi_n^2)| < c\tr(\Psi_n^2)$ by the mean value theorem:

\begin{gather}
     \mathbb E_{r_n, x_n, z_n}
   \mathbf{1}_{\{|\zeta_n - \tr(\Psi_n^2) | < c\tr(\Psi_n^2)\}}\bigg | \erfc\bigg(
    \frac{\sgn(\eta_n )\kappa_n}{\sqrt{\frac \pi p \sigma_n\tr(\Psi_n^2)}}
    \bigg) - \erfc\bigg(
    \frac{\sgn(\eta_n )\kappa_n}{\sqrt{\frac \pi p \sigma_n\zeta_n}}
    \bigg)\bigg |
    \\ \leq 
    \mathbb E \frac{\sqrt{p}C|\kappa_n |}{\tr(\Psi_n^2)^{3/2}} \big | \zeta_n - \tr(\Psi_n^2)\big| 
    \leq \frac{\sqrt p C \sqrt{\mathbb E[\kappa_n^2] \mathbb E |\zeta_n - \tr(\Psi_n^2)|^2}}{\tr(\Psi_n^2)^{3/2}}. 
\end{gather}

for some constant $C$. We can now use the following Lemma to show that the right hand side is $O(n^{-1/2})$ in expectation: 

\begin{lemma}
    Let $r \in \mathbb{R}^d$ be a vector distributed  either uniformly on the surface of the sphere of radius $\sqrt d$ or according to a mean-zero Gaussian with covariance $I_d$. Then for any matrix $A \in \mathbb{R}^{d\times d}$ we have 

    \begin{gather}
        \mathbb E|r^\top A r - \tr(A)| = O(||A||_{\mathrm{op}}) + O(||A||_F)  ,
        \\
        \mathbb E|r^\top A r - \tr(A)|^2 = O(||A||^2_{\mathrm{op}}) + O(||A||^2_F)
    \end{gather}

    where $|| \cdot ||_{\mathrm{op}}$ is the operator norm induced by the Euclidean norm. 
    \label{lem:qform}
\end{lemma}

\begin{proof}
    Applying the Hanson-Wright inequality for random vectors satisfying the convex concentration property \cite{adamczak2015note, vershynin2018high} we obtain the first identity

    \begin{gather}
        \int p(dr) |r^\top A r - \tr(A)| 
        = 
        \int_0^{\infty}
        dt
        \mathbb P(|r^\top A r - \tr(A) | > t)
        \\\leq \int_0^\infty dt \exp\bigg(-\frac{C t^2}{||A||_F^2}\bigg) + \exp\bigg( - \frac{c t}{||A||_{\mathrm{op}}}\bigg)  = O(||A||_F) + O(||A||_{\mathrm{op}}),
    \end{gather}
where $C, c$ are positive constants. Similarly we have 
    \begin{gather}
        \int p(dr) |r^\top A r - \tr(A)|^2
        = 
        2\int_0^{\infty}
        dt
        \mathbb P(|r^\top A r - \tr(A) | > t) t
          = O(||A||_F^2) + O(||A||^2_{\mathrm{op}}),
    \end{gather}
yielding the second identity.
\end{proof}

Applying this estimate and using the fact that $\mathbb E \kappa_n^2 = \tr(\Phi_n^\top \Psi_n \Phi_n) = O(n^{-2})$ then shows that the right hand side is $O(n^{-1/2})$. We are therefore left with

\begin{gather}
    \frac 1 2 \mathbb E_{r_n} \mathbb E_{\eta_n, \kappa_n} \erfc\bigg(
    \frac{\sgn(\eta_n) \kappa_n}{\sqrt{\frac \pi p \sigma_n\tr(\Psi_n^2)}}
    \bigg) + O(n^{-1/2}). 
    \label{eq:bivariate}
\end{gather}

From our definitions, the variables $\eta_n, \kappa_n$ follow a bivariate Gaussian distribution: 

\begin{gather}
    \eta_n, \kappa_n \sim \mathcal{N} \bigg(0, 
    \begin{pmatrix}
        r_n^\top \Omega_n r & r_n^\top \Phi^\top_n \Phi_n r_n 
        \\
        r_n^\top \Phi^\top_n \Phi_n r_n & r_n^\top \Phi_n^\top \Psi_n \Phi_n r_n 
    \end{pmatrix}
    \bigg) .
\end{gather}

The inner expectation can now be carried out analytically. We start by integrating over the conditional distribution $\kappa_n | \eta_n$, which is Gaussian with mean $\eta_n r_n^\top \Phi_n^\top \Phi_n r /\sigma_n $ and variance $r_n^\top \Phi_n^\top \Psi_n \Phi_n r_n  - (r_n^\top \Phi^\top_n \Phi_n r_n)^2/\sigma_n$. To do this, we need the Gaussian integral over the complementary error function
\begin{gather}
    \frac 1 2\int \frac{dx}{\sqrt{2\pi v}}e^{-(x-m)^2/(2v)}\mathrm{erfc}(cx)
=
\frac 1 2 \mathrm{erfc}\bigg(\frac{mc}{\sqrt{1+2vc^2}}\bigg).
\label{eq:gauss-erfc}
\end{gather}

This formula can be derived by considering the function 

\begin{gather}
    I(a,b) := \int Dx \erfc(ax+b) ,
\end{gather}
where $Dx$ is a standard Gaussian measure. This integral can be solved by differentiating under the integral with respect to $b$, leading to Eq. \eqref{eq:gauss-erfc}. Applying this identity to Eq. \eqref{eq:bivariate} we obtain 

\begin{gather}
    \frac 1 2 \mathbb E_{r_n} \mathbb E_{\eta_n} \erfc \bigg( 
    \frac{|\eta_n| r_n^\top \Phi_n^\top \Phi_n r_n}{r_n^\top \Omega_n r_n 
    \sqrt{\frac \pi p\sigma_n  \tr(\Psi_n^2) 
    + 2 \big[r_n^\top \Phi_n^\top \Psi_n \Phi_n r_n 
    - \frac{(r_n^\top \Phi_n^\top \Phi_n r_n)^2}{r_n^\top \Omega_n r_n}\big]
    }
    }
    \bigg) .
\end{gather}

We can now split the integral over the $\eta_n$ into half Gaussian integrals over complementary error functions. This allows us to invoke the relation
\begin{gather}
    \frac 1 2 \int_0^\infty \frac{\sqrt{2} dx}{\sqrt{\pi}}e^{-x^2/2} \mathrm{erfc}(cx)
= \frac 1 \pi \tan^{-1} \bigg(\frac 1 { \sqrt{2} c} \bigg).
\end{gather}
This formula can similarly be derived by considering the function 
\begin{gather}
    J(c) := \int_0^\infty Dx \erfc(cx) 
\end{gather}
and differentiating with respect to $c$. This leaves

\begin{gather} 
    \frac 1 \pi \mathbb E_{r_n} \tan^{-1}\bigg( 
    \sqrt{\frac{{r_n^\top \Omega_n r_n\big[
    \frac \pi p \tr(\Psi_n^2)r_n^\top \Omega_n r_n  
    + 2 r_n^\top \Phi_n^\top \Psi_n \Phi_n r_n 
    \big]}}{2(r_n^\top \Phi_n^\top \Phi_n r_n)^2}
    - 1 }
    \bigg) 
    + O(n^{-1/2}) .
    \end{gather} 
Note that \emph{this gives the generalization error for a fixed task vector $r_n$}. 

\indt To carry out the final expectation, we expand these quadratic forms around their mean values. To do this, let us introduce the $O(1)$ variables

\begin{gather}
    \sigma^n_1 = r_n^\top \Omega_n r_n ,
    \quad 
    \sigma^n_2 = n^2 r_n^\top \Phi_n^\top \Psi_n \Phi_n r_n,
    \quad 
    \sigma^n_3 = n r_n^\top \Phi_n^\top \Phi_n r_n ,
\end{gather}
and the function 

\begin{gather}
    G(\sigma^n) = \tan^{-1}\bigg(\sqrt{\frac{\sigma^n_1 (c \sigma^n_1 + 2n^{-2}\sigma^n_2)}{2n^{-2}(\sigma^n_3)^2} -1}\bigg) ,
\end{gather}

where $c:=\frac \pi p \tr(\Psi_n)^2=O(n^{-2})$. We start by noting that we can focus on estimating the expectation over a region in which $||\sigma^n - \mathbb E \sigma^n || < t$ for $t>0$. To see this, note that by the Hanson-Wright inequality: 

\begin{gather}
    \mathbb E_{r_n} G( \sigma^n) = \mathbb E_{r_n} \mathbf{1}_{\{ ||\sigma^n - \mathbb E \sigma^n||<t\}} G(\sigma^n) + O(e^{-c\sqrt n}) .
\end{gather}

We can now bound the remaining error incurred by replacing $\sigma^n$ with its expected value as: 
\begin{align}
    |G(\sigma^n) - G(\mathbb E \sigma^n)| 
    &\leq \sum_i \sup_{||\xi - \mathbb E \sigma^n || < t } |\partial_i G(\xi)||\sigma^n_i - \mathbb E \sigma^n_i| 
    \\
    &= \sum_i  \sup_{||\xi - \mathbb E \sigma^n || < t }\bigg |\bigg[\frac{\xi_1(c\xi_1 + 2n^{-2}\xi_2)}{2n^{-2}(\xi_3)^2}-1\bigg]^{-1/2} \bigg[\frac{\xi_1(c\xi_1 + 2n^{-2}\xi_2)}{2n^{-2}(\xi_3)^2} \bigg]^{-1} \\
\\ 
&\partial_i \frac{\xi_1(c\xi_1 + 2n^{-2} \xi_2)}{2n^{-2}(\xi_3)^2}\bigg| |\sigma^n_i - \mathbb E \sigma^n_i| 
    \\
    &\leq M    \sup_{||\xi - \mathbb E \sigma^n || < t }\bigg|\frac{\xi_1(c\xi_1 + 2n^{-2}\xi_2)}{2n^{-2}(\xi_3)^2}-1\bigg|^{-1/2}   \sum_i |\sigma^n_i - \mathbb E \sigma^n_i| ,
\end{align}

for some constant $M$. We can now use the following lemma to give a uniform bound on the remaining term: 
\begin{lemma}

     Let $C$ by a positive definite matrix with submatrices
    \begin{gather}
        C = 
        \begin{pmatrix} 
        \Psi & \Phi 
        \\ \Phi^\top & \Omega 
        \end{pmatrix},
\end{gather}
with $\Psi \in \mathbb R^{n \times n}, \Omega \in \mathbb R^{d\times d},$ and $\Phi \in \mathbb{R}^{n\times d}$. Then 

\begin{gather}
    \frac{\tr(\Omega) \tr(\Phi^\top \Psi \Phi)}{\tr(\Phi^\top \Phi)^2} \geq 1. 
\end{gather}
\end{lemma}

\begin{proof}
    The claim follows from an application of the Cauchy-Schwarz inequality:
    \begin{gather}
        \tr(\Phi^\top \Phi)^2 = (\mathbb E \langle \Phi^\top x, z \rangle)^2  \leq \mathbb E || \Phi^\top x||^2 \mathbb E ||z||^2 = \tr(\Phi^\top \Psi \Phi) \tr(\Omega) , 
\end{gather}
where the expectation is taken over zero-mean, jointly Gaussian vectors $(x,z)$ which have a covariance matrix $C$. 
\end{proof}
From this lemma and our assumptions on the spectrum, it follows that there exists a constant $M'>0$ that does not depend on $n$ such that

\begin{gather}
    \frac{\langle \sigma^n_1 \rangle (c \langle \sigma^n_1 \rangle + 2n^{-2}  \langle \sigma^n_2 \rangle)}{2n^{-2} \langle \sigma^n_1 \rangle^2} \geq 1 + M',
\end{gather}
where we use brackets to denote the expectation. Hence if we choose $t$ sufficiently small, we obtain

\begin{gather}
    |G(\sigma^n) - G(\mathbb E \sigma^n)| \leq M'' \sum_i |\sigma^n_i - \mathbb E \sigma^n_i | ,
\end{gather}
where $M''$ is a positive constant. By Lemma \ref{lem:qform}, the right hand side is $O(n^{-1/2})$. 


















\indt Replacing these quadratic forms with their means, we obtain the final result:  

\begin{gather}
    \frac 1 \pi \tan^{-1}\bigg( \sqrt{
    \frac{\tr(\Omega_n) \big[
    \frac \pi p \tr(\Psi_n^2)\tr(\Omega_n) 
    + 2 \tr(\Phi_n^\top \Psi_n \Phi_n) 
    ]}{2\tr(\Phi_n^\top \Phi_n)^2}
    -1}
    \bigg) 
    + O(n^{-1/2}).
    \label{eq:final-eg}
\end{gather}
\end{proof}

\section{Decomposition of generalization error into geometric terms}

Having obtained a closed-form solution for generalization error in the thermodynamic limit, we now seek to rewrite this formula as a monotone function of interpretable geometric terms.
We start by splitting Eq. \eqref{eq:final-eg} into a $p$-dependent and $p$-independent term:
\begin{gather}
\frac{1}{\pi}\tan^{-1}\left(\sqrt{
    \frac{\pi}{2p}\frac{ 
\mathrm{Tr}(\Omega_n) 
\mathrm{Tr}(\Psi_n^2) \mathrm{Tr}(\Omega_n)}{(\Tr(\Phi_n\Phi_n^\top))^2}
+ \frac{\Tr(\Omega_n)\mathrm{Tr}(\Phi_n^\top\Psi_n\Phi_n)}
{(\tr(\Phi_n\Phi_n^\top))^2}
-1}\right).
\end{gather}
If we re-write the $p$-dependent term as
\begin{gather}
\frac{\pi}{2p} \left(\frac{\Tr(\Omega_n)\Tr(\Psi_n)}{\Tr(\Phi_n\Phi_n^\top)}\right)^2 \cdot \frac{\Tr(\Psi_n^2)}{\Tr(\Psi_n)^2},
\end{gather}
we can notice the \textbf{participation ratio of the neural activity} 
\begin{gather}
\mathrm{PR}(\Psi) = \frac{\Tr(\Psi_n)^2}{\Tr(\Psi_n^2)},
\end{gather} 
which is a well-known measure of the dimensionality of neural activity. We collect the rest of the $p$-dependent term into our \textbf{correlation term}
\begin{gather}
c = \frac{\Tr(\Phi_n\Phi_n^\top)}{\Tr(\Omega_n)\Tr(\Psi_n)},
\end{gather}
and the generalization error now reduces to
\begin{gather}
\frac{1}{\pi}\tan^{-1}\left(\sqrt{
    \frac{\pi}{2pc^2\PR(\Psi)}
+ \frac{\Tr(\Omega_n)\mathrm{Tr}(\Phi_n^\top\Psi_n\Phi_n)}
{(\tr(\Phi_n\Phi_n^\top))^2}
-1}\right).
\end{gather} 
To interpret the correlation term, note that the numerator $\Tr(\Phi_n\Phi_n^\top)$ can be expanded as
\begin{gather}
\Tr(\Phi_n\Phi_n^\top) = \sum_{i=1}^d\sum_{j=1}^n \mathbb E[z_ix_j]^2,
\end{gather}
which is a sum-of-squares of the covariance between all pairs of latent variables $z_i$ and all components $x_j$ of neural activity. The denominator of $c$ is just the total latent variance $\Tr(\Omega_n)$ times the total variance $\Tr(\Psi_n)$ of neural activity. The correlation term $c$ can therefore be viewed as a generalization of the well-known Pearson correlation between two scalar variables to capture the total correlation strength between the $d$-dimensional latent variable $z$ and the $n$-dimensional neural activity $x$. In particular, when $n,d = 1$, the total correlation $c$ reduces to the square of the standard Pearson correlation between $z$ and $x$.

\indt We call (the inverse of) the remaining $p$-independent term the \textbf{alignment term} $a$:
$$a = \frac{(\Tr(\Phi_n\Phi_n^\top))^2}{\Tr(\Omega_n)\Tr(\Phi_n^\top\Psi_n\Phi_n)}.$$
In the main text (and as described below), the alignment term is subdivided further, but we can also interpret $a$ as a whole. We can expand the main term in the denominator as
\begin{gather} 
\Tr(\Phi_n^\top \Psi_n \Phi_n) = \sum_{i=1}^d \phi_i^\top \Psi_n \phi_i = \sum_{i=1}^d \|\phi_i\|^2 \operatorname{Var}(\hat \phi_i \cdot x),
\end{gather} 
where $\phi_i$ is the $i^\mathrm{th}$ column of $\Phi_n$, and $\operatorname{Var}(\hat \phi_i \cdot x)$ is the variance of the total neural response $x$ projected onto the direction of $\phi_i$. Using the fact that the average neural activity, conditioned on a specific value of the latents is given by $\mathbb E[x|z] = \Phi \Omega^{-1} z$, we can see that for a diagonal $\Omega$, the column $\hat \phi_i$ corresponds to the \emph{coding direction of the latent variable $z_i$}. Additionally, the norm $\|\phi_i\|$ reflects the \emph{coding strength of the latent variable $z_i$}.  The numerator of $a$ can be written as
\begin{gather}
(\Tr(\Phi_n\Phi_n^\top))^2 = \left(\sum_{i=1}^d \|\phi_i\|^2\right),
\end{gather}
and therefore depends only on the norms of the coding directions $\phi_i$ but not on their angles. Holding the norms of each $\phi_i$ fixed, we can see that the alignment term $a$ prefers that each coding direction $\hat \phi_i$ be positioned along a direction of minimal response variance in the neural state space. In other words, the only variance along a direction $\hat \phi_i$ should be caused by variations in the corresponding latent variable.
Intuitively, the alignment term $a$ encourages arranging the coding directions of independent latent variables in such a way that the signal for these variables do not interfere with one another and so that all latents are coded along directions with low intrinsic noise. We make this notion more formal below.

\indt Now let us subdivide the alignment term $a$ as in the main text. We partition the total neural variance $\Psi$ into stimulus-driven variance and stimulus-independent variance. Recall that in our setup, latents $z$ and neural responses $x$ are drawn from a joint Gaussian distribution
\begin{gather}
(x, z) \sim \mathcal{N}\left(0, \begin{pmatrix}\Omega & \Phi^\top \\ \Phi & \Psi\end{pmatrix}\right).
\end{gather} 
This is equivalent to a model in which we first sample $z \sim \mathcal{N}(0, \Omega)$ and then sample $x$ as $x = \Phi\Omega^{-1} z + \varepsilon$, where $\varepsilon$ is stimulus-independent Gaussian noise with a covariance given by
\begin{gather}
    \Psi - \Phi \Omega^{-1} \Phi^\top ,
\end{gather}
which is the covariance of the neural responses conditional on the latents, $\mathrm{cov}(x|z)$. This suggests a partitioning of the total neural variance $\Psi$ into stimulus-independent variance, $H := \Psi - \Phi\Omega^{-1}\Phi^\top$, and stimulus-driven variance, $\mathrm{cov}(x) - H = \Phi \Omega^{-1} \Phi^\top$.

\indt We can now decompose the $p$-independent term of our generalization error formula as
\begin{gather}
\frac{\Tr(\Omega_n)\Tr(\Phi_n^\top \Psi_n \Phi_n)}{(\Tr(\Phi_n\Phi_n^\top))^2} = \frac{\Tr(\Omega_n)\Tr(\Phi_n^\top H \Phi_n)}{(\Tr(\Phi_n\Phi_n^\top))^2} + \frac{\Tr(\Omega_n)\Tr(\Phi_n^\top (\Phi\Omega^{-1}\Phi^\top) \Phi_n)}{(\Tr(\Phi_n\Phi_n^\top))^2}.
\end{gather} 
We call the inverse of the first term \textbf{signal-noise factorization} $s$:
\begin{gather}
\frac{1}{s} = \frac{\Tr(\Omega_n)\Tr(\Phi_n^\top H\Phi_n)}{(\Tr(\Phi_n\Phi_n^\top))^2}.
\end{gather}
This interpretation can be understood by noting that
\begin{gather}
\Tr(\Phi_n^\top H \Phi_n) = \sum_{i=1}^d \phi_i^\top H \phi_i 
= \sum_{i=1}^d \|\phi_i\|^2 \cdot \hat\phi_i^\top H \hat\phi_i.
\end{gather}
We can see that $\hat\phi_i^\top H \hat\phi_i$ gives the projection of stimulus-independent noise corrupting the signal for the $i^\mathrm{th}$ latent variable, so $\Tr(\Phi_n^\top H \Phi_n)$ measures the amount of stimulus-independent noise corrupting the signal directions for each latent variable. In particular, note that $s$ is maximized when there is no stimulus-independent noise in the neural dimensions used for coding the latent variables.


\indt Finally, we call the inverse of the second part of the $p$-independent term the \textbf{signal-signal factorization}:
\begin{gather}
    \frac{1}{f} = \frac{\Tr(\Omega_n)\Tr(\Phi_n^\top (\Phi\Omega^{-1}\Phi^\top) \Phi_n)}{(\Tr(\Phi_n\Phi_n^\top))^2}.
\end{gather}
To understand this term, assume without loss of generality that 
\begin{gather}
\Omega = \operatorname{diag}(\omega_1, \dots, \omega_D),
\end{gather}
which can be accomplished by making an orthogonal change of variables. Now observe that
\begin{gather}
    \Tr(\Phi_n^\top\Phi_n\Omega^{-1}\Phi_n^\top\Phi_n) = \sum_{i=1}^d\sum_{j=1}^d \frac{1}{\omega_i} \langle \phi_i, \phi_j \rangle^2.
\end{gather}
The numerator of $1/f$ is a weighted sum of the dot products between \emph{all pairs} of coding vectors $\phi_i$ and $\phi_j$ for latents $z_i$ and $z_j$. By contrast, the denominator $(\Tr(\Phi_n\Phi_n^\top))^2$ can be expanded to
\begin{gather}
    (\Tr(\Phi_n\Phi_n^\top))^2 = \left(\sum_{i=1}^d\langle \phi_i, \phi_i \rangle\right)^2,
\end{gather}
which only captures the signal strengths $\|\phi_i\|$ without regard for the angles between pairs of distinct signal directions. With the norms $\|\phi_i\|$ fixed, $f$ is maximized by making all coding directions orthogonal. That is, $f$ is maximized by a factorized code.

\indt The interpretation of our factorization term $f$ is somewhat complicated by the fact that it depends on the relative norms $\|\phi_i\|$ of the columns of $\Phi$ in addition to the coding directions $\hat \phi_i$. Some such dependence is inevitable. For example, if one coding direction $\phi_i$ has norm very near zero (i.e., has near zero signal strength to begin with), then the angle it makes with other coding directions is irrelevant to generalization error and should not enter into $f$. However, we will demonstrate numerically that practically all of the variation we observe in $f$ between the layers of artificial neural networks (Fig. 5 and SM Fig. \ref{fig:mlp-tanh}), during training of artificial neural networks (Fig. 6 and SM Fig. \ref{fig:mlp-dynamics-tanh}), and between brain regions (Fig. 7 and SM Fig. \ref{fig:majaj-global}) is driven by changes in the \emph{angles} between coding directions. To this end, we obtain a simplified analog of $1/f$ by assuming that all signal directions have the same norm and assuming that $\Omega$ has a flat spectrum. This yields the simplified quantity
\begin{equation}
\frac{\sum_{i=1}^d \sum_{j=1}^d \langle \hat \phi_i, \hat \phi_j \rangle}{d},
\end{equation}
which manifestly only depends on the angles between coding directions. In fact, this simplified version of $1/f$ is linearly related to the mean-squared cosine similarity between all pairs of distinct coding directions:
\begin{equation}\label{eq:mscs}
\frac{\sum_{i \neq j} \langle \hat \phi_i, \hat \phi_j \rangle}{d(d-1)}.
\end{equation}
We therefore compare the quantity $1/f$ with the mean-squared cosine similarity between all pairs of distinct coding directions (Eq. \ref{eq:mscs}) and show that the two are almost exactly linearly related for all representations we analyzed in both artificial neural networks and neural data (SM Fig. \ref{fig:sm-fact}). 

\begin{figure}
    \centering
    \includegraphics[width=0.95\textwidth]{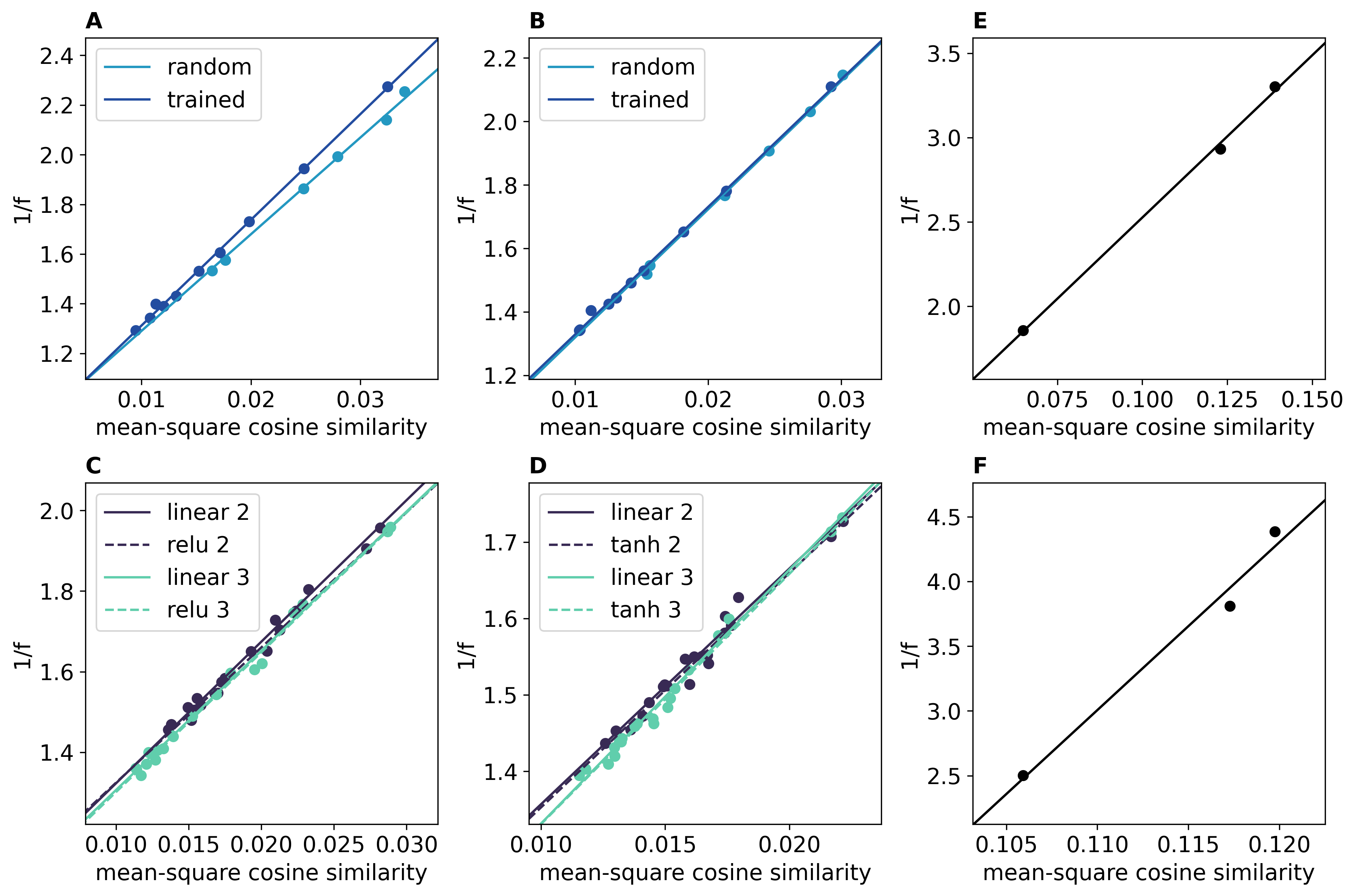}
    \caption{Almost all variation in our factorization term is driven by changing angles between coding directions. We compare the reciprocal $1/f$ of factorization to the mean-square cosine angle between the coding directions for all pairs of distinct latent variables (see Eq. \eqref{eq:mscs}), which only depends on the angles between coding directions. Despite our factorization term $f$ nominally depending on the norms $\|\phi_i\|$ of the coding directions, the almost perfect linear relationship between $1/f$ and mean-square cosine similarity shows that we can effectively consider $f$ to only depend on the angles between coding directions across all the experiments presented in this work: (a) Representations across layers of random and trained ReLU networks (Fig. 5). (b) Representations across layers of random and trained tanh networks (SM Fig. \ref{fig:mlp-tanh}). (c) Representations taken across training epochs in trained ReLU networks (Fig. 6). (d) Representations across training epochs in trained tanh networks (SM Fig. \ref{fig:mlp-dynamics-tanh}). (e) Representations across pixels, V4, and IT in macaque electrophysiological data (measures computed separately for each object category, Fig. 7). (f) Representations across pixels, V4, and IT in macaque electrophysiological data (measures pooled across object categories, SM Fig. \ref{fig:majaj-global}).} 
    \label{fig:sm-fact}
\end{figure}

\section{Optimal geometry}

Here we derive our formula for the spectrum of the optimal neural representation and show that it is disentangled. The argument is not fully rigorous, but it matches numerical results very well. Our goal is to calculate

\begin{gather}
    \arg \min_{\Psi, \Phi} \frac 1 \pi \tan^{-1}\bigg( \frac{ 
\sqrt{\mathrm{Tr}(\Omega) 
\big[
\frac \pi P \mathrm{Tr}(\Psi^2) \mathrm{Tr}(\Omega)
+ 2 \big(\mathrm{Tr}(\Phi^\top\Psi\Phi) 
- \frac{(\mathrm{Tr}[\Phi\Phi^\top])^2}{\mathrm{Tr}(\Omega)}
\big)
\big]
}
}
{\sqrt{2} \mathrm{Tr}(\Phi\Phi^\top)}
\bigg),
\label{eq:optimize}
\end{gather}

subject to 

\begin{gather}
    \begin{pmatrix}
\Omega & \Phi 
\\
\Phi^\top & \Psi 
\end{pmatrix}\succ 0.
\end{gather}

For a positive matrix $\Omega$, this condition is equivalent to

\begin{gather}
    \Psi \succ 0, \quad \Psi - \Phi \Omega^{-1} \Phi^\top \succ 0. 
\end{gather}

Rewriting the argument of the $\tan^{-1}$ function, we can see that the objective may be written as 

\begin{gather}
     \arg \min_{\Psi, \Phi}\frac{\frac{\pi}{P}\Tr(\Psi^2)\Tr(\Omega)+2\Tr(\Phi^\top\Psi\Phi)}{\tr(\Phi\Phi^\top)^2}. 
    \label{eq:simpleobj}
\end{gather}

Our approach is to first minimize Eq. \eqref{eq:simpleobj} holding $\tr(\Phi\Phi^\top)=\gamma$ fixed and show that the objective is ultimately invariant to the choice of $\gamma$. The optimization can be done by introducing the Lagrangian 

\begin{gather}
    L(\Psi, \Phi, \rho, \zeta) =
    c \tr(\Psi^2) + 2 \tr(\Phi^\top \Psi \Phi) 
    + \zeta(\gamma -\tr(\Phi\Phi^\top))
    \\
    - \int d^nx \rho(x) \tr(xx^\top (\Psi - \Phi \Omega^{-1} \Phi^\top)) ,
\end{gather}

where the $\rho(x)$ are KKT multipliers enforcing the positive definite constraint, $\zeta$ is a Lagrange multiplier, and $c:=\frac \pi P\tr(\Omega)$ . Note that we do not explicitly enforce the positivity constraint on $\Psi$, as we find that this is unnecessary. The KKT equations are

\begin{gather}
    \partial_\Psi L = 2c\Psi + 2\Phi \Phi^\top - \langle xx^\top\rangle_\rho=0, 
    \label{eq:psi}
    \\
    \partial_\Phi L = 4 \Psi \Phi  +2 \langle xx^\top\rangle \Phi \Omega^{-1}  + \zeta \Phi  = 0,
    \label{eq:phi}
    \\
    \delta_{\rho} L = \tr(xx^\top (\Psi - \Phi \Omega^{-1} \Phi^\top)) \geq 0 ,
    \\
    \rho(x)\tr(xx^\top (\Psi - \Phi \Omega^{-1} \Phi^\top )) =0 ,
    \\
    \tr(\Phi\Phi^\top) = \gamma.
\end{gather}

We try for a solution with $\Psi = \Phi\Omega^{-1} \Phi^\top.$ Since the variance of $x$ conditional on $z$ is given by $\Psi - \Phi \Omega^{-1} \Phi^\top,$ this is equivalent to looking for solutions in which the neural code has no signal-unrelated noise. Then Eq. \eqref{eq:psi} gives $\langle xx^\top\rangle_\rho = 2 \Phi (c\Omega^{-1} + I) \Phi^\top$. Using these two formulae, Eq. \eqref{eq:phi} gives the condition 

\begin{gather}
    \Phi ( 4 \Omega^{-1} \Phi^\top \Phi + 4(c\Omega^{-1} + I)\Phi^\top \Phi \Omega^{-1} + \zeta I) = 0. 
\end{gather}

This suggests looking for a $\Phi$ whose right singular vectors are aligned with those of $\Omega$. The eigenvalues of $\Phi$ are then given by:

\begin{gather}
    \phi_i^2 = \frac{\gamma}{\sum_i \frac{\omega_i^2}{2\omega_i + \frac{\pi}{P}\tr(\Omega)}}
    \frac{\omega_i^2}{2\omega_i + \frac{\pi}{p}\tr(\Omega)}.
\end{gather}

From the assumption that $\Psi = \Phi^\top \Omega^{-1} \Phi $, we can see that the eigenvectors of $\Psi$ and the left singular vectors of $\Phi$ can be chosen however we want, so long as they are the same. The eigenvalues of $\Psi$ are then simply

\begin{gather}
    \psi_i = \frac{\gamma}{\sum_i \frac{\omega_i^2}{2\omega_i + \frac{\pi}{p}\tr(\Omega)}}
    \frac{\omega_i}{2\omega_i + \frac{\pi}{p}\tr(\Omega)}.
\end{gather}

Plugging this solution into the objective, we indeed find that $\gamma$ drops out of the picture entirely.  Since we are free to choose $\gamma$, we obtain that the optimal representation satisfies the following properties: (1) The eigenvectors of $\Psi$ are the left singular vectors of $\Phi$ and (2) The eigenvectors of $\Omega$ are the right singular vectors of $\Phi$. This implies that the optimal representation is disentangled–i.e., that independent directions in the latent space map onto independent directions in the neural space. 

\indt To see why, consider the covariance between an independent direction in the neural space and an independent direction in the latent space. If we let $o$ be the eigenvector of $\Psi$ corresponding to the neural direction and $u$ an eigenvector of $\Omega$ corresponding to the latent direction, we have

    \begin{gather}
        \mathbb E [\langle o,x\rangle \langle u, z\rangle] = u^\top \Phi o.
    \end{gather}
For the optimal representation, this is either 0 or equal to a singular value of $\Phi.$ Since for Gaussian variables, uncorrelated variables are independent, this argument shows that each independent direction in the neural space is independent from all but one latent dimension. Thus the optimal representation described here is disentangled. 

\indt Finally, the eigen/singular-values of the matrices are given by 

\begin{gather}
    \phi_i^2 \propto \frac{\omega_i^2}{2p\omega_i + \pi\tr(\Omega)},
    \\
    \psi_i \propto \frac{\omega_i}{2p\omega_i + \pi\tr(\Omega)}.
\end{gather}

This gives the optimal representation's spectrum. As stated in the main text, we can see that as $p$ grows relative to $d$, the optimal representation's spectrum becomes increasingly flat, indicating that more variance in the neural state space is being allocated to the less informative directions in the latent space.

\section{Additional MLP analyses}

Here we redo the analyses presented in the main text using a $\tanh$ non-linearity in the random and trained networks.

\begin{figure}
    \centering
    \includegraphics[width=0.95\textwidth]{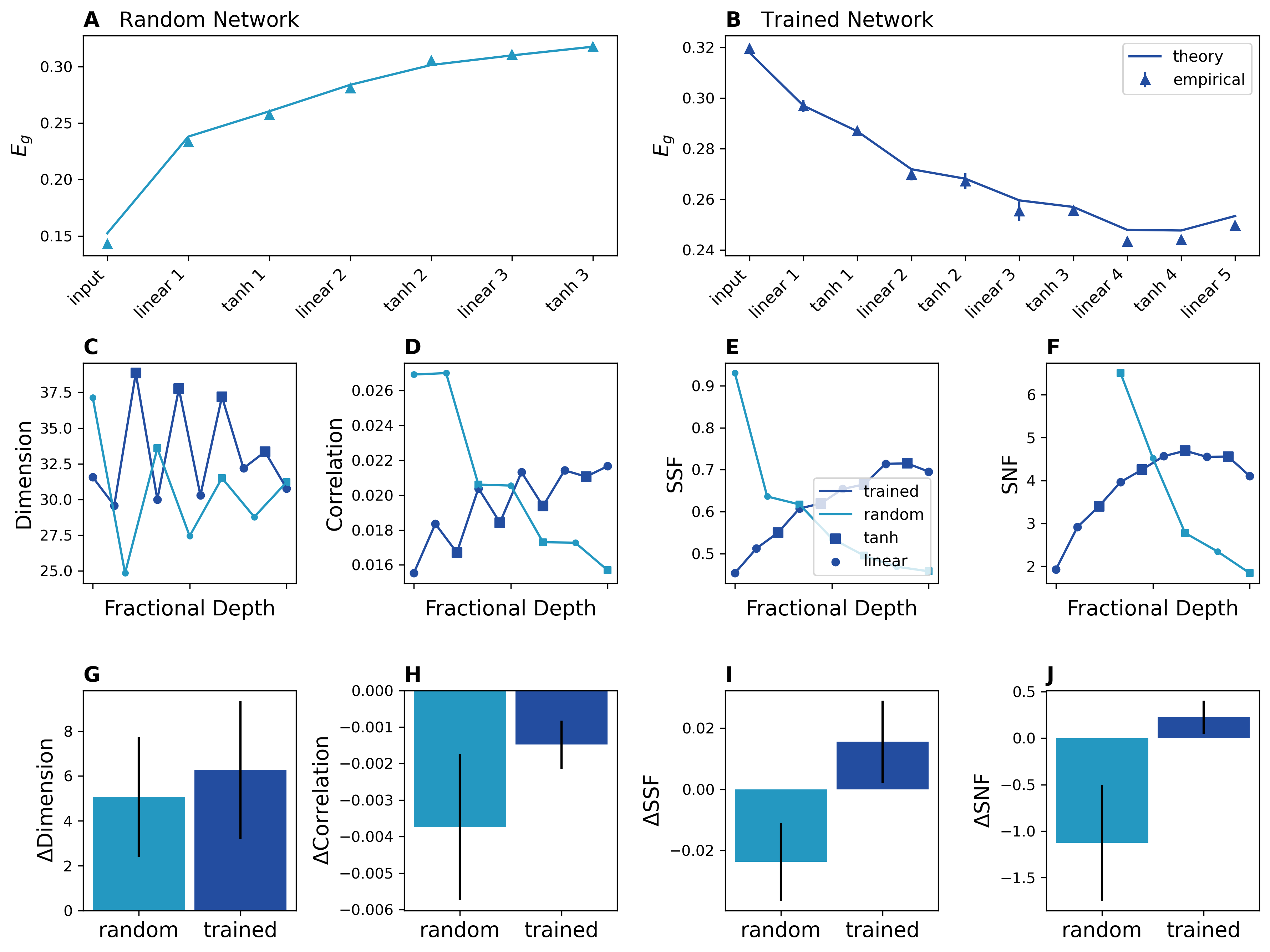}
    \caption{Generalization error and geometry of the $\tanh$ MLPs. (a-b) Generalization error for the random and trained network. (c-f) Layer-wise geometry of the networks. (g-j) Average change in the geometry when passing from a linear to a $\tanh$ layer. We find similar trends as with the relu non-linearity.}
    \label{fig:mlp-tanh}
\end{figure}

\begin{figure}
    \centering
    \includegraphics[width=0.95\textwidth]{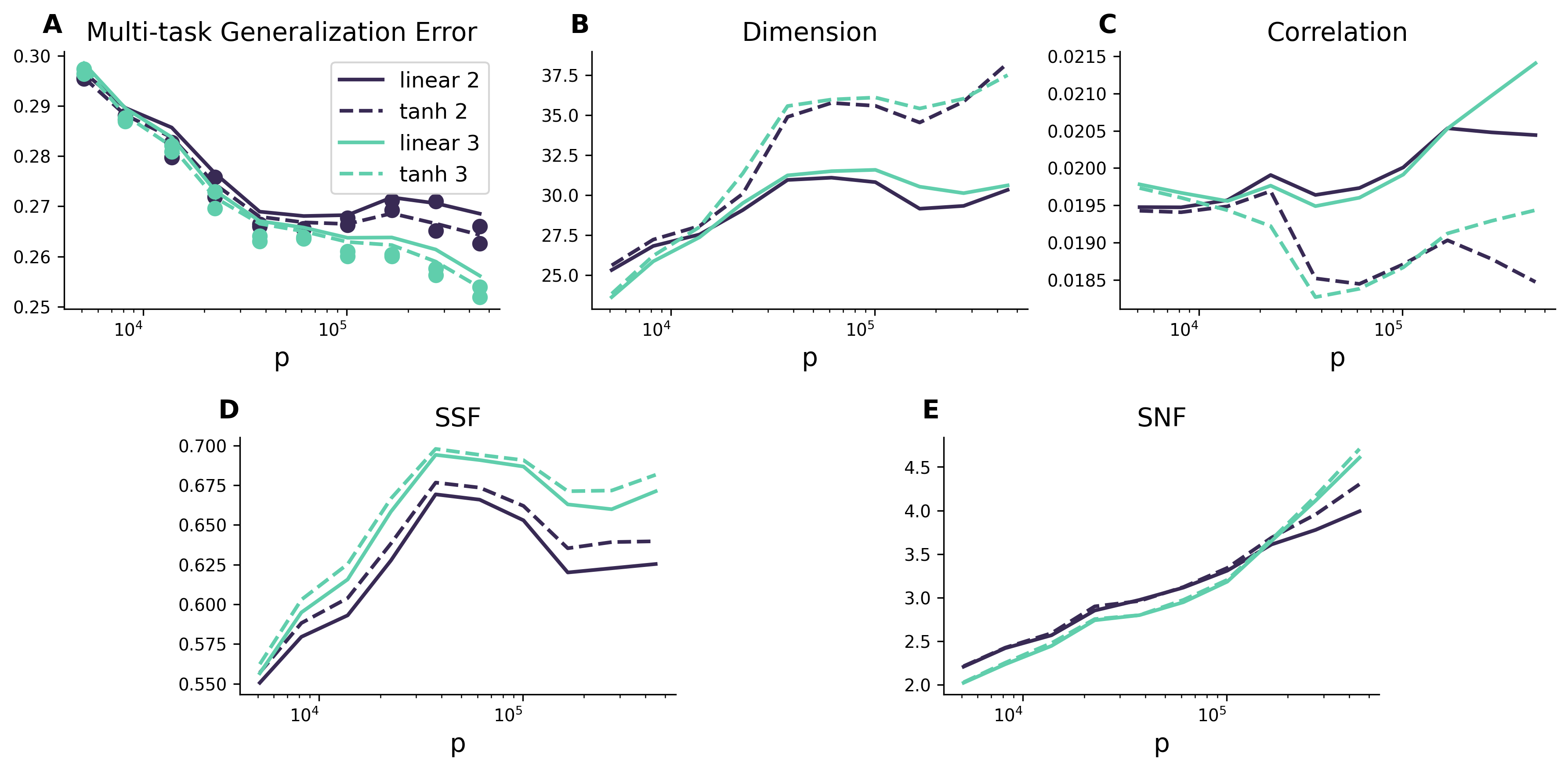}
    \caption{Dynamics of generalization error (a) and geometry (b-e) through training using networks with a $\tanh$ non-linearity. In analogy to the main text, we show 2 early and 2 late layers. At each of the early/late stages, we show a linear and $\tanh$ layer.}
    \label{fig:mlp-dynamics-tanh}
\end{figure}

\clearpage
\newpage 




\clearpage
\newpage

\section{Additional macaque analyses}

\begin{figure}[h]
    \centering
    \includegraphics[width=0.95\textwidth]{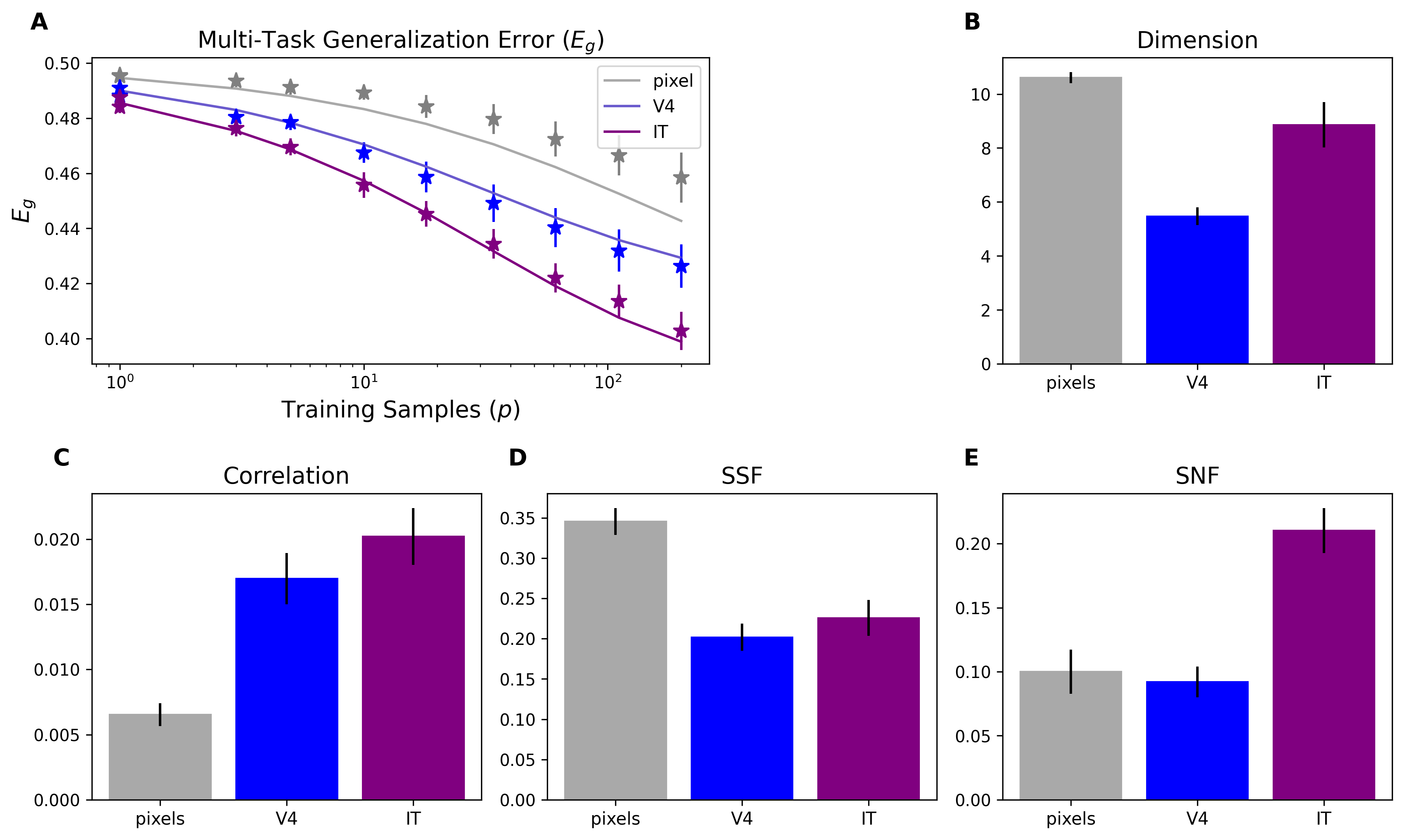}
    \caption{Geometry and generalization error using the same number of units across regions \cite{majaj2015simple}. Here, we repeat the analysis presented in the main text, only we project the pixels and IT data down to 88 dimensions using Gaussian random projection. (a) Generalization error for the three representations. (b-e) Geometric terms. We find qualitatively similar trends to those presented in the main text.}
    \label{fig:majaj-proj}
\end{figure}

\begin{figure}
    \centering
    \includegraphics[width=0.475\textwidth]{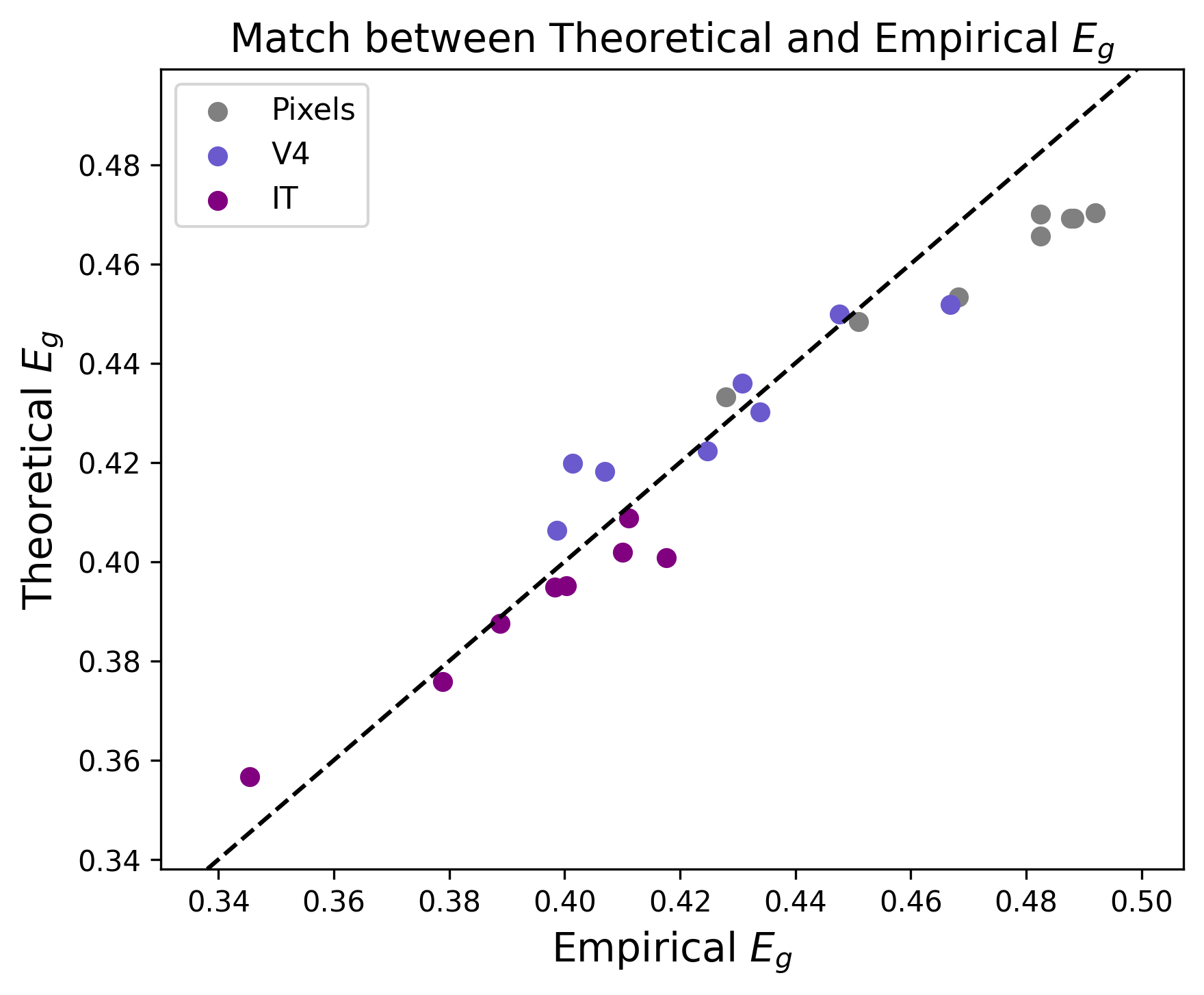}
    \caption{Match between theoretical and empirical generalization error across each of the 8 individual categories. In the main text, we presented the theoretical and empirical generalization errors averaged over individual object categories. Here we show that the theory predicts the empirical generalization error well across individual object categories.}
    \label{fig:majaj-errs}
\end{figure}

\begin{figure}
    \centering
    \includegraphics[width=0.95\textwidth]{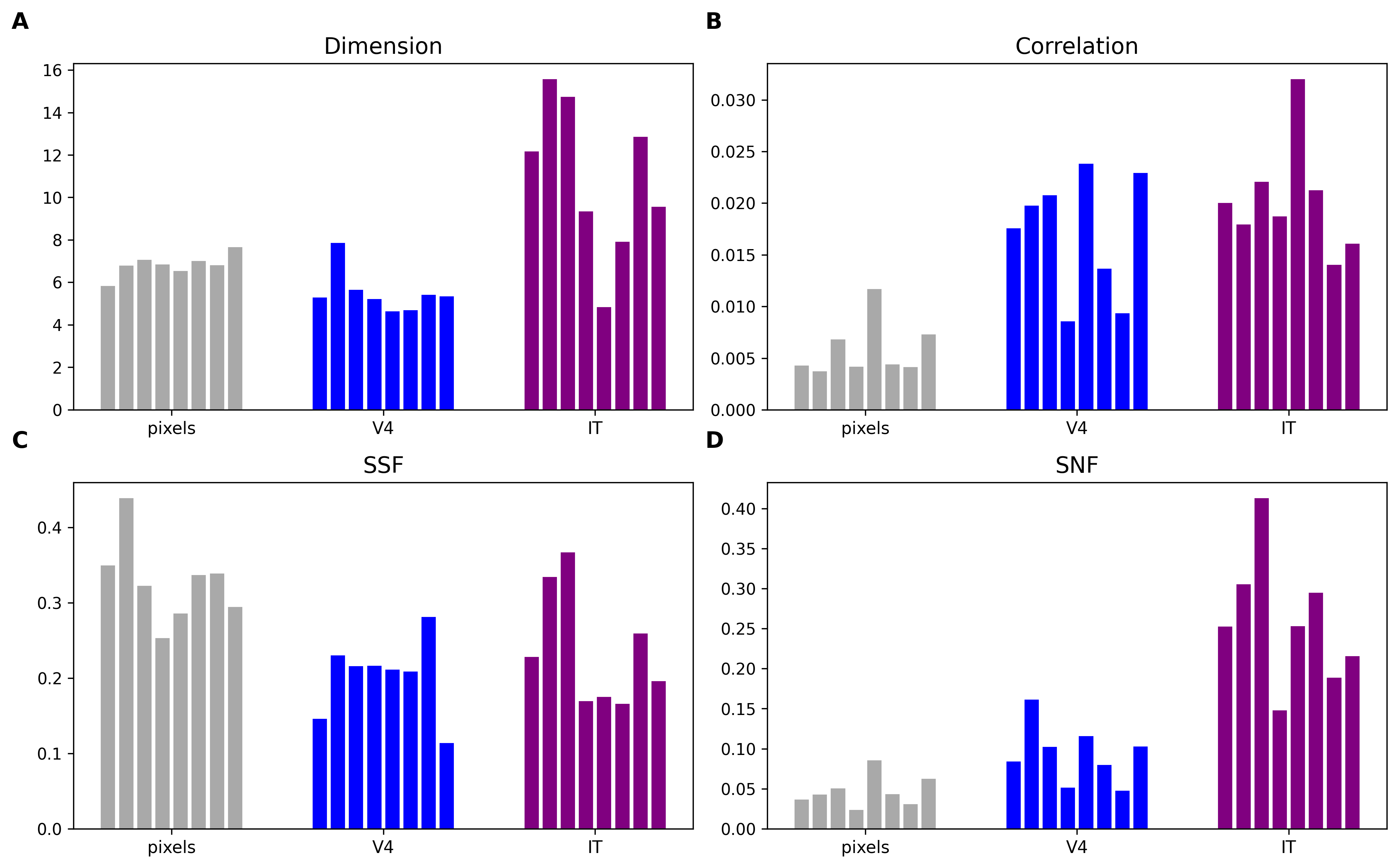}
    \caption{Geometry across individual object categories. Here, we present the distribution of geometric terms, calculated on subsets of the data corresponding to each of the 8 object categories. Note that the terms in the main text correspond to averages over these 8 values.}
    \label{fig:maj-indiv}
\end{figure}

\clearpage
\begin{figure}
    \centering
    \includegraphics[width=0.95\textwidth]{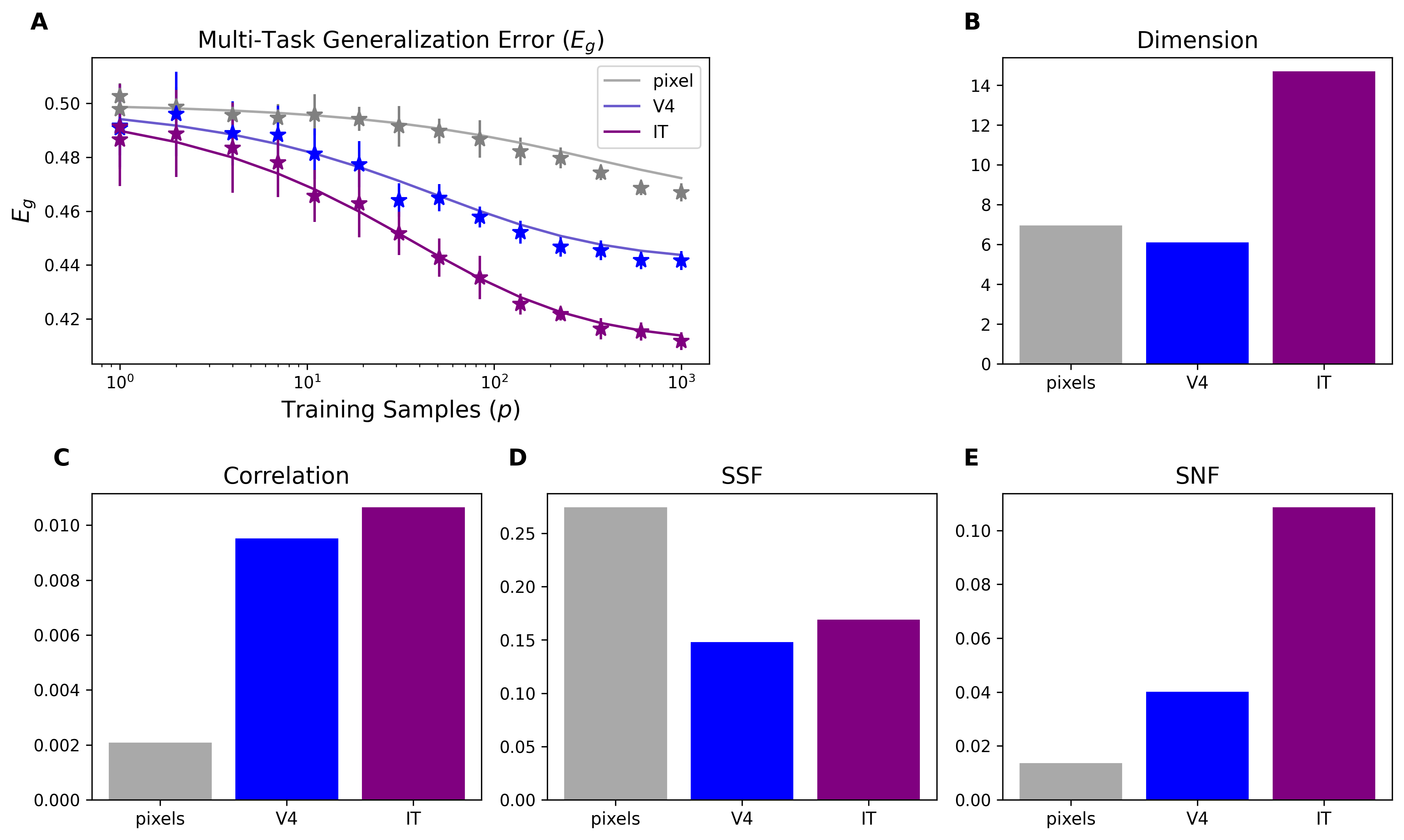}
    \caption{Generalization error and geometry of the pooled monkey data. Here, we pool data from all categories together, rather than considering data subsets corresponding to stimuli coming from the same object category as done in the main text. We can see that the  trends are largely the same and that the theory predicts the empirical error. (a) Theoretical and empirical generalization error. (b-e) Geometry of the pooled data.}
    \label{fig:majaj-global}
\end{figure}

\clearpage
\newpage

\bibliographystyle{unsrt}